\newtheorem{lemma}{Lemma}
\newtheorem{theorem}[lemma]{Theorem}
\newtheorem{claim}[lemma]{Claim}
\def\comic#1#2#3{\parbox{#1}{\centering\includegraphics[width=#1]{#2}\\{\footnotesize #3}}}
\def\comicII#1#2{\parbox{#1}{\centering\includegraphics[width=#1]{#2}}}
\newcommand{\old}[1]{{}}
\newcommand{\degree}{\ensuremath{^\circ}}
\title{Minimum Covering with Travel Cost\thanks{A preliminary
extended abstract of this paper appears in \cite{fms-mctc-09}. A
4-page abstract based on Sections 3 and 4 of this paper appeared in
the informal and non-selective workshop ``EuroCG'', March 2009
\cite{fs-lctnwdv-09}.}}
\author{S{\'a}ndor P.~Fekete
      \thanks{Department of Computer Science, TU Braunschweig,
Germany. Email {\tt \{s.fekete,c.schmidt\}@tu-bs.de}.}
    \and
  Joseph S.~B.~Mitchell\thanks{Department of Applied Mathematics and Statistics, Stony Brook University, USA. Email {\tt jsbm@ams.stonybrook.edu}.} \thanks{Partially supported by
the National Science Foundation (CCF-0528209, CCF-0729019), Metron
Aviation, and NASA Ames.} 
\and
    Christiane Schmidt \footnotemark[2] \thanks{Partially supported by DFG Focus Program ``Algorithm Engineering'' (SPP 1307)
    project ``RoboRithmics'' (Fe 407/14-1). }}
\begin{document}
\date{}

\maketitle

\markboth{}{}


\begin{abstract}

Given a polygon and a visibility range, the Myopic Watchman Problem
with Discrete Vision (MWPDV) asks for a closed path $P$ and a set of
scan points ${\cal S}$, such that (i) every point of the polygon  is
within visibility range of a scan point; and (ii) path length plus
weighted sum of scan number along the tour is minimized.
Alternatively, the bicriteria problem (ii') aims at minimizing both
scan number and tour length. We consider both lawn mowing (in which
tour and scan points may leave $P$) and milling (in which tour, scan
points and visibility must stay within $P$) variants for the MWPDV;
even for simple special cases, these problems are NP-hard.

We show that this problem is NP-hard, even for the special cases of
rectilinear polygons and $L_\infty$ scan range 1, and negligible
small travel cost or negligible travel cost.  For rectilinear MWPDV
milling in grid polygons we present a 2.5-approximation with unit
scan range; this holds for the bicriteria version, thus for any
linear combination of travel cost and scan cost. For grid polygons
and circular unit scan range, we describe a bicriteria
4-approximation. These results serve as stepping stones for the
general case of circular scans with scan radius $r$ and arbitrary
polygons of feature size $a$,
%
%
for which we extend the underlying ideas to a
$\pi(\frac{r}{a}+\frac{r+1}{2})$ bicriteria approximation algorithm.
Finally, we describe approximation schemes for MWPDV lawn mowing and
milling of grid polygons, for fixed ratio between scan cost and
travel cost.

{\bf Keywords:} Covering, Minimum Watchman Problem, limited visibility,
lawn mowing, bicriteria problems, approximation algorithm, PTAS.

\end{abstract}

\section{Introduction}
Covering a given polygonal region by a small set of disks or squares
is a problem with many applications. Another classical problem is
finding a short tour that visits a number of objects. Both of these
aspects have been studied separately, with generalizations motivated
by natural constraints.

In this paper, we study the combination of these problems,
originally motivated by challenges from robotics, where accurate
scanning requires a certain amount of time for each scan; obviously,
this is also the case for other surveillance tasks that combine
changes of venue with stationary scanning. The crucial constraints
are (a) a limited visibility range, and (b) the requirement to stop
when scanning the environment, i.e., with vision only at discrete
points. These constraints give rise to the {\em Myopic Watchman
Problem with Discrete Vision} (MWPDV), the subject of this paper.

For a scan range that is not much bigger than the feature size of
the polygon, the MWPDV combines two geometric problems that allow
approximation schemes (minimum cover and TSP). This makes it
tempting to assume that combining two approximation schemes will
yield a polynomial-time approximation scheme (PTAS), e.g., by using
a PTAS for minimum cover (Hochbaum and Maass
\cite{hm-ascppipvlsi-85}), then a PTAS for computing a tour on this
solution. As can be seen from Figure~\ref{ptasx2}(a) and (b), this
is not the case; moreover, an optimal solution depends on the
relative weights of tour length and scan cost. This turns the task
into a bicriteria problem; the example shows that there is no
simultaneous PTAS for both aspects. As we will see in
Sections~\ref{mwpdv_rect} and \ref{AppxCircular}, a different
approach allows a simultaneous constant-factor approximation for
both scan number and tour length, and thus of the combined cost. We
show in Section~\ref{ptas}, a more involved integrated guillotine
approach allows a PTAS for combined cost in the case of a fixed
ratio between scan cost and travel cost.

\begin{figure}[h]
\centering
\begin{minipage}{.45\textwidth}
    \comic{\textwidth}{optscan_plus_tour_gray}{(a)}
    \\
    \comic{\textwidth}{notwoptas_gray}{(b)}
    \end{minipage}
    \hfill
     \comic{.45\textwidth}{fewscans}{(c)}
     \caption{\label{ptasx2}\small (a) An MWPDV solution
with a minimum number of scans; (b) an MWPDV solution with a minimum
tour length.
 (c) A minimum guard cover may involve
scan points that are not from an obvious set of candidate points.}
\end{figure}

A different kind of difficulty is highlighted in
Figure~\ref{ptasx2}(c), where an example called ``wells''
(\cite{amp-lgvcp-10}) is illustrated: For a visibility range $r$
that is large compared to the size of the niches (the ``wells''), it
may be quite hard to determine a guard cover of small size, since it
is not clear what should be a small set of candidate guard locations
that suffice for coverage. Each ``well'' is covered by a pair of
points, whose locations need not correspond, e.g., to vertices in
the arrangement of visibility polygons of the vertices of the
polygon. In fact, because of this difficulty, there is no known
nontrivial (with factor $o(n)$) approximation for minimum guard
cover in simple $n$-gons.  With an assumption about a sufficient set
of guard points (e.g., vertices of the polygon or a grid within the
polygon), an $O(\log OPT)$-approximation by Efrat and
Har-Peled~\cite{eh-ggt-06} is known. Also note that the optimal
solution for our problem may change significantly with the relative
weights between tour length and scan cost: If the tour length
dominates the number of scans in the objective function, an optimal
tour can be forced to follow the row of niches on the right in the
figure. We will show in Section 6 how to obtain a constant-factor
approximation for a bounded value $\frac{r}{a}$, $a$ being the
minimum side length of $P$.

 {\bf Related Work.} Closely related to practical problems of
searching with an autonomous robot is the classical theoretical
problem of finding a {\em shortest watchman tour}; e.g., see
\cite{cn-owr-88,cn-swrsp-91}. Planning an optimal set of scan points
(with unlimited visibility) is the {\em art gallery problem}
\cite{o-agta-87}. Finally, visiting all grid points of a given set
is a special case of the classical {\em Traveling Salesman Problem}
(TSP); see \cite{ips-hpgg-82}. Two generalizations of the TSP are
the so-called {\em lawn mowing} and {\em milling problems}: Given a
cutter of a certain shape, e.g., an axis-aligned square, the {\it
milling problem} asks for a shortest tour along which the (center of
the) cutter moves, such that the entire region is covered and the
cutter stays inside the region at all times. Clearly, this takes
care of the constraint of limited visibility, but it fails to
account for discrete visibility. At this point, the best known
approximation method for milling is a 2.5-approximation
\cite{afm-aalmm-00}. Related results for the TSP with neighborhoods
(TSPN) include \cite{dm-aatnp-03,m-ptast-07}; further variations
arise from considering online scenarios, either with limited vision
\cite{bgs-euped-01} or with discrete vision
\cite{fs-pedv-09,fs-pedv-08}, but not both. The discrete visibility
is intrinsic to the art gallery problem, but no tour is considered
here. For this problem neither constant-factor approximation
algorithms nor exact solution methods are known, recent results
include an algorithm based on linear programming that provides lower
bounds on the necessary number of guards in every step and---in case
of convergence and integrality---ends with an optimal solution by
Baumgartner et al.~\cite{bfks-esbgagp-10}. Finally,
\cite{aab-mccps-06} consider covering a set of points by a number of
scans, and touring all scan points, with the objective function
being a linear combination of scan cost and travel cost; however,
the set to be scanned is discrete, and scan cost is a function of
the scan radius, which may be small or large.

For an online watchman problem with unrestricted but discrete
vision, Fekete and Schmidt \cite{fs-pedv-09} present a comprehensive
study of the milling problem, including a strategy with constant
competitive ratio for polygons of bounded feature size
%
%
and with the assumption that each edge of the polygon is fully
visible from some scan point. For limited visibility range, Wagner
et al.~\cite{blw-mvpdr-00} discuss an online strategy that chooses
an arbitrarily uncovered point on the boundary of the visibility
circle and backtracks if no such point exists. For the cost they
only consider the length of the path used between the scan points,
scanning causes no cost. Then, they can give an upper bound on the
cost as a ratio of total area to cover and squared radius.

{\bf Our Results.} On the positive side, we give a 2.5-approximation
for the case of grid polygons and a rectangular range of unit-range
visibility, generalizing the 2.5-approximation by Arkin, Fekete, and
Mitchell \cite{afm-aalmm-00} for continuous milling. The underlying
ideas form the basis for more general results: For circular scans of
radius $r=1$ and grid polygons we give a 4-approximation. Moreover,
for circular scans of radius $r$ and arbitrary polygons of feature
size $a$, we extend the underlying ideas to a
$\pi(\frac{r}{a}+\frac{r+1}{2})$-approximation algorithm. All these
results also hold for the bicriteria versions, for which both scan
cost and travel cost have to be approximated simultaneously.
Finally, we present a PTAS for MWPDV lawn mowing and a PTAS for
MWPDV milling, both for the case of fixed ratio between scan cost
and travel cost.

The rest of the paper is organized as follows. In the following
Section~\ref{notation} we give the notation and formally define the
Myopic Watchman Problem with Discrete Vision. Section~\ref{hardness}
provides a NP-hardness proof. Approximation algorithms for grid
polygons and rectangular unit scan range, grid polygons and circular
unit scan range as well as for general polygons and circular scan
range are presented in Sections~\ref{mwpdv_rect}, \ref{AppxCircular}
and \ref{gencirc}, respectively. A description of polynomial-time
approximation schemes for both the lawn mowing and the milling
variant are given in Section~\ref{ptas}. In the final
Section~\ref{conclusion} we discuss possible implications and
extensions.

\section{Notation and Preliminaries}\label{notation}
We are given a polygon $P$. In general, $P$ may be a polygon with
holes; in Sections 3, 4 and 5, $P$ is an axis-parallel polygon with
integer coordinates.

Our robot, $R$, has discrete vision, i.e., it can perceive its
environment when it stops at a point and performs a scan, which
takes $c$ time units. 
      From a scan point $p$, only a ball of radius $r$ is
visible to $R$, either in $L_{\infty}$- or $L_2$-metric. A set
${\cal S}$ of scan points {\em covers} the polygon $P$, if and only
if for each point $q\in P$ there exists a scan point $p\in {\cal S}$
such that $q$ sees $p$ (i.e., $qp\subset P$) and $|qp|\leq r$.

We then define the {\em Myopic Watchman Problem with Discrete
Vision} (MWPDV) as follows: Our goal is to find a tour $T$ and a set
of scan points ${\cal S}(T)$ that covers $P$, such that the total
travel and scan time is optimal, i.e., we minimize $t(T) = c\cdot
|{\cal S}(T)| + L(T)$, where $L(T)$ is the length of tour~$T$.
Alternatively, we may consider the bicriteria problem, and aim for a
simultaneous approximation of both scan number and tour length.

\section{NP-Hardness}\label{hardness}
Even the simplest and extreme variants of MWPDV lawn mowing are
still generalizations of NP-hard problems.

\begin{theorem}\label{np}
$\;$\\\vspace*{-0.5cm}
\begin{enumerate}
\item[(1)] The MWPDV is NP-hard, even for polyominoes and small or no
scan cost, i.e., $c\ll1$ or $c=0$.

\item[(2)] The MWPDV is NP-hard, even for polyominoes and small travel
cost, i.e., $c\gg1$.

\item[(3)] The MWPDV is NP-hard, even for polyominoes and no travel
cost, i.e., $t(T)=|{\cal S}|$.
\end{enumerate}
\end{theorem}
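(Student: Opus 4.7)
The plan is to attack the three parts with two independent reductions, each isolating one of the two cost components of the MWPDV objective. Part~(1) isolates the tour length; Parts~(2) and~(3) isolate the scan count. In each case the reduction builds a polyomino whose minimum MWPDV solution encodes a solution to a known NP-hard source problem, so that an efficient algorithm for MWPDV in the relevant regime would solve the source problem.

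For Part~(1), I would reduce from Hamiltonicity of solid grid graphs (NP-hard by Itai, Papadimitriou, and Szwarcfiter~\cite{ips-hpgg-82}), or equivalently from NP-hardness of lawn mowing/milling of polyominoes with a unit cutter. Given a grid graph $G$, build a polyomino $P$ whose unit cells are in bijection with the vertices of $G$. With $c=0$ the cost of an MWPDV solution equals the length of a closed tour whose scans cover every cell of $P$; under $L_\infty$ scan range~$1$ this forces the tour to pass through an essentially prescribed set of points, so that the shortest such tour meets a threshold value iff $G$ admits a Hamiltonian cycle. The case $c\ll 1$ follows by choosing $c$ smaller than the $1/\mathrm{poly}(n)$ tour-length gap between Hamiltonian and non-Hamiltonian covering tours in the constructed instance, so that the scan term cannot tip the tie-breaking.

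For Parts~(2) and~(3), travel cost is either dominated or entirely absent, so the task reduces to minimum cover of a polyomino by $2\times 2$ $L_\infty$ scan footprints. I would show this covering problem NP-hard by a gadget reduction from planar vertex cover (or planar 3-SAT): rectilinear corridors act as wires whose minimum covers are periodic, and variable and clause gadgets force a correspondence between minimum scan covers and minimum vertex covers of the source graph. For Part~(2), pick $c$ larger than the perimeter of $P$ so that any MWPDV-optimal solution first minimizes scan count and only then tour length; Part~(3) needs no such tie-breaking, since the only objective is $|\mathcal{S}|$.

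The main obstacle is the covering reduction underlying Parts~(2) and~(3). Unlike the lawn-mowing side, where established polyomino gadgets can be reused, a $2\times 2$ footprint at an arbitrary (not necessarily grid-aligned) position admits many local shifts and alternatives: the wire and junction gadgets must be designed so that their cheapest covers are rigid, so that no cover can \emph{slide} across a gadget boundary to save a scan, and so that variable- and clause-gadget contributions are counted with the right multiplicities. Once this rigidity is in place, the correspondence between minimum scan covers and minimum vertex covers — and hence the NP-hardness of Parts~(2) and~(3) — is routine bookkeeping.
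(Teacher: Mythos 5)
Your Part~(1) matches the paper: the paper simply cites the NP-hardness of minimum-cost milling of polyominoes, which is itself the grid-graph-Hamiltonicity argument you sketch, and the perturbation to $c\ll 1$ via a polynomial tour-length gap is fine. The real issue is Parts~(2) and~(3). You reduce both to the NP-hardness of covering a polyomino by $2\times 2$ scan footprints, but you never actually establish that hardness: you name the source problem (planar vertex cover / Planar 3SAT), describe what the wire, variable and clause gadgets would have to achieve (rigid minimum covers, no sliding across gadget boundaries, correct multiplicities), and then declare the rest ``routine bookkeeping'' after calling exactly this construction ``the main obstacle.'' That construction \emph{is} the technical content of Part~(3); asserting that gadgets with the required rigidity exist is not a proof. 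The paper does carry it out (a Planar 3SAT reduction with an explicit variable gadget admitting exactly two minimum placements encoding true/false, edge corridors with parity-fixing appendages, and a clause gadget needing $3$ vs.\ $4$ extra scans according to satisfaction), and it also flags a point you ignore: scan centers must lie \emph{inside} $P$, which is precisely why one cannot just quote an off-the-shelf disc/square covering hardness result and why the gadgets must be checked against this constraint.

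For Part~(2) your route is logically sound only if $c$ is allowed to exceed the maximum useful tour length of the constructed instance (so that the optimum of $c\cdot|{\cal S}|+L(T)$ reveals the minimum scan count), and even then it makes Part~(2) a corollary of the very covering reduction you have not supplied. The paper's argument for (2) is independent of (3) and works for any fixed $c$: scale a grid graph by $2$ and replace each vertex by a $2\times 2$ block; an area argument forces exactly $n$ scans, located at the block centers, in any optimal solution, so the scan term is constant and the residual travel term decides Hamiltonicity ($L=2n$ iff $G$ is Hamiltonian). Adopting that self-contained argument for (2), and actually constructing the gadgets for (3), would close the gap.
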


\begin{proof}
The first claim is a result of the hardness of minimum cost milling,
see \cite{afm-aalmm-00}. The second claim is an easy consequence of
the NP-hardness of {\tt Hamiltonicity of Grid Graphs} (HGG)
\cite{ips-hpgg-82}: Given an instance $G$ of HGG with $n$ vertices,
turn it into an instance of MWPDV by scaling the grid graph $G$ by a
factor of two, and replacing each grid point of $G$ by a 2x2-square.
This yields a canonical set of $n$ scan points that is contained in
any optimal WMPDV tour; traveling these with a tour length $2n$ is
possible if and only the graph $G$ is Hamiltonian.

The third claim is closely related to a minimum cover problem by
visibility discs; however, the MWPDV requires that the scan points
must be inside of the polygonal region. We give a proof along the
lines of \cite{bf-agdp}, based on a reduction of the NP-hard problem
{\tt Planar 3SAT}, a special case of 3SAT in which the
variable-clause incidence graph $H$ is planar. As a first step, we
construct an appropriate planar layout of the graph $H$, e.g., by
using the method of Rosenstiehl and Tarjan~\cite{rt-rplbo-86}. This
layout is turned into a grid polygon by representing the variables,
the clauses and the edges of $G$. An example for the variable
component is given in Figure~\ref{variable}.

\begin{figure}
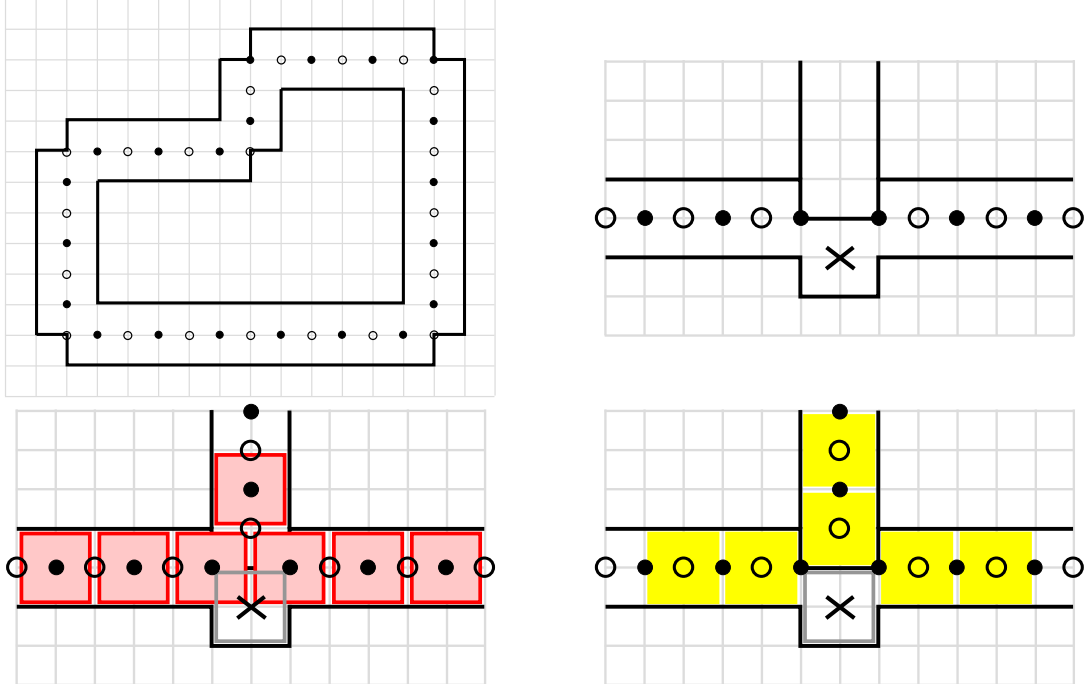

\centering
    \hspace*{0.07\textwidth}
    \comicII{.4\textwidth}{variable_1}
    \hspace*{0.06\textwidth}
    \comicII{.4\textwidth}{variable_edge}
    \hspace*{0.07\textwidth}
    \\
    \hspace*{0.07\textwidth}
    \comicII{.4\textwidth}{variable_true}
    \hspace*{0.06\textwidth}
    \comicII{.4\textwidth}{variable_false}
    \hspace*{0.07\textwidth}

 \caption{\label{variable}\small Top row: The polygonal piece for a variable and the connection of an edge corridor. Bottom row: a placement corresponding to
``true'' (left), and a placement corresponding to ``false'' (right).
The light gray scan is used in both cases. }
\end{figure}

The variable gadgets allow two ways for locating a minimum number of
scan points. The first (the black points in Figure \ref{variable})
relates to a setting of ``true'', the other (the circles in Figure
\ref{variable}) to a setting of ``false''. For the variable setting
``true'' the scan squares are pushed further into the edge corridor.
These edge corridors are similar to the variable-circles---bendings
are done accordingly. In order to have the same number of points and
circles, edge corridors may be added, that do not end in another
polygonal piece, but assure this parity (with circles at the edge
corridor).

A clause component is given in Figure~\ref{clause}. Edge corridors
of the three associated variables (each with the appropriate truth
setting) meet in the polygonal piece for the clause (dark gray in
Figure~\ref{clause}). If and only if the clause is satisfied, i.e.,
if at least in one edge corridor a scan square is placed at a black
point, three additional scans suffice to cover this polygonal piece.
Otherwise, four scans are necessary.

Given the components defined above we can compute the parameter $k$,
the number of scan squares necessary to cover the entire resulting
polygon $P$. $k$ is polynomial in the number of vertices of $G$ and
part of the input. All vertices of the resulting $P$ have integer
coordinates of small size, their number is polynomial in the number
of vertices of $G$. This shows that the problem is NP-hard. \qed

\begin{figure}[h]
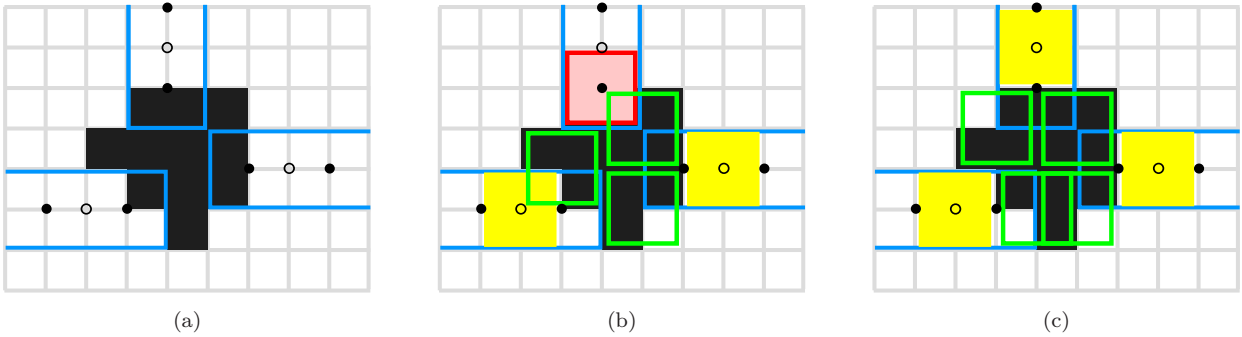

\centering
    \comic{.3\textwidth}{clause_1}{(a)}
    \hfill
    \comic{.3\textwidth}{clause_true}{(b)}
    \hfill
    \comic{.3\textwidth}{clause_false}{(c)}
 \caption{\label{clause}\small A clause component with the polygonal piece (dark gray) and the three according edge corridors (blue)(a).
 A placement corresponding to ``true'' (b), and a placement corresponding to
``false'' (c). The green squares are the scans necessary inside the
clause. }
\end{figure}

\end{proof}

\section{Approximating Rectilinear MWPDV Milling for Rectangular Visibility
Range}\label{mwpdv_rect} As a first step (and a warmup for more
general cases), we give an approximation algorithm for rectilinear
visibility range in rectilinear grid polygons.

The following lemma allows us to focus on visiting and scanning at
grid points.

\begin{lemma}\label{gridscans}
For a rectilinear pixel polygon $P$ there exists an optimum myopic
watchman tour $T^*$ such that all scan points are located on grid
points:
\begin{equation}
\exists T^* \forall (x_s, y_s) \in {\cal S}(T^*): x_s \in \mathds{Z}
\wedge y_s \in \mathds{Z}.
\end{equation}
\end{lemma}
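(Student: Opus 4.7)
My plan is an exchange argument: starting from any optimum tour $T^*$, I iteratively move each scan point with a non-integer coordinate to a grid point, showing that the cost can be preserved at each step.

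Consider a scan $s = (x_s, y_s) \in {\cal S}(T^*)$ with $x_s \notin \mathds{Z}$; write $x_s = k + \alpha$ with $k \in \mathds{Z}$ and $\alpha \in (0,1)$. As $s$ slides horizontally within $[k, k+1]$, its $L_\infty$ unit scan square $[x_s - 1, x_s + 1] \times [y_s - 1, y_s + 1]$ always fully contains the pixel column $[k, k+1]$ and partially covers the two adjacent columns $[k-1, k]$ (width $1 - \alpha$) and $[k+1, k+2]$ (width $\alpha$). Because $P$ is a pixel polygon with integer boundaries, the partial coverages trade off linearly in $\alpha$, and at the two endpoints $\alpha \in \{0, 1\}$ the scan becomes grid-aligned, fully covering one adjacent column and none of the other.

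The central coverage claim is that at least one of the shifts $s \mapsto (k, y_s)$ or $s \mapsto (k+1, y_s)$ preserves the coverage of $P$. Assume otherwise: then $s$ uniquely covers some portion of a pixel of $P$ in each adjacent column. But the remainder of each such pixel, which is not covered by $s$, must be covered by some other scan in ${\cal S}(T^*)$; a geometric case-check using the $L_\infty$ unit scan size and the integer pixel boundaries then shows that any such ``other'' scan necessarily extends back into $s$'s region and covers the supposedly-unique portion as well, contradicting uniqueness. I expect this case analysis, especially the handling of measure-zero boundary coincidences, to be the main technical obstacle of the proof.

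Once a coverage-preserving direction is fixed, the tour-length effect must be accounted for. The contribution of $s$ to $L(T^*)$ along its two adjacent tour edges, $|s_{\mathrm{prev}} - s| + |s - s_{\mathrm{next}}|$, is convex as a function of $x_s$ on the horizontal line $y = y_s$. I would then pick, among all optimum tours, the one that (i) has the maximum number of grid-aligned scan points and (ii) among such, has the largest coverage-preserving slack at each such scan; a remaining non-grid scan would permit an exchange that either strictly decreases the cost (contradicting optimality) or keeps the cost equal while improving the grid-aligned count (contradicting maximality). For the milling variant, both candidates $(k, y_s)$ and $(k+1, y_s)$ lie in $P$, since the horizontal segment between them is contained in the pixel of $P$ that contains $s$. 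An identical argument in the $y$-coordinate then snaps $y_s$ to an integer, and iterating over all scan points yields the desired optimum.
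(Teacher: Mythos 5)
Your proposal takes a genuinely different route from the paper (which realigns all scans of each maximal horizontal strip simultaneously, anchored at the strip's boundary, and then repeats for vertical strips), but as written it has two concrete gaps. First, the tour-length step is backwards: the contribution $|s_{\mathrm{prev}}-s|+|s-s_{\mathrm{next}}|$ being \emph{convex} in $x_s$ means the interior point $x_s=k+\alpha$ can be the \emph{strict minimizer} on $[k,k+1]$ (e.g.\ when the segment $s_{\mathrm{prev}}s_{\mathrm{next}}$ crosses the line $y=y_s$ at a non-integer abscissa), so it is entirely possible that \emph{both} snaps $x_s\mapsto k$ and $x_s\mapsto k+1$ strictly lengthen the tour. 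Convexity gives you that an endpoint attains the maximum, not the minimum, so the claimed dichotomy ``either strictly decreases the cost or keeps it equal while increasing the grid-aligned count'' does not follow. Even if one endpoint happened to be length-non-increasing, you would still need it to coincide with the coverage-preserving direction, and nothing in the argument couples the two.

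Second, the central coverage claim --- that at least one of the two snaps preserves coverage --- rests on an unproven case analysis, and the sketch given for it does not close. If the snap to $k$ fails, some point $p_1$ of $P$ with $x$-coordinate in $(k+1, k+1+\alpha]$ is covered only by $s$; the rest of $p_1$'s pixel must indeed be covered by other scans, but a scan $s'$ covering, say, the portion of that pixel to the right of $x=k+1+\alpha$ can have its center as far right as $x_{s'}=k+2+\alpha$, in which case its square reaches left only to $x=k+1+\alpha$ and misses $p_1$ entirely; the freedom in the $y$-coordinate of $s'$ makes the situation worse, since the ``uniquely covered'' slivers can sit in corner regions of $s$'s square. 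So the asserted contradiction with uniqueness is not forced, and the lemma's truth here really does require the detailed argument you defer. By contrast, the paper's strip-wise realignment (placing integer $x$-coordinates starting at distance $1$ from the strip's boundary) sidesteps the per-scan directional dilemma by re-packing all scans responsible for a strip at once, at the cost of being terse about why the tour does not lengthen; if you want to salvage your local exchange, you will need both a genuine proof of the coverage dichotomy and a tour-length argument that does not rely on convexity favoring endpoints --- for instance, re-routing the tour globally after snapping rather than charging only to the two incident edges.
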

\begin{proof}
Let $T$ be an optimal tour, with scan points not located on grid
points. Consider the vertical and horizontal strips of pixels in $P$
of maximal length. W.l.o.g., we start with the horizontal strips.
For every strip, we shift the scans, such that the x-coordinates are
integers (starting from the boundary, i.e., with distance $1$ to the
boundary if possible, and away from non-reflex corners). The tour
will not be longer, we cover not less; in case we are able to reduce
the number of scans per strip by one we have a contradiction to $t$
being optimal. After applying this to all horizontal strips we
proceed analogously for the vertical strips. Hence, we have an
optimal tour, with all scan points located on grid points.
\end{proof}

Our approximation proceeds in two steps:
\begin{enumerate}
\item[(I)] Construct a set of scan points that is not larger than 2.5 times
a minimum cardinality scan set.
\item[(II)]Construct a tour that contains all constructed scan points and
that does not exceed 2.5 times the cost of an optimum milling tour.
\end{enumerate}

The main idea for the second step is based on the $O(n \log n)$-time
2.5-approximation algorithm for milling from Arkin et
al.~\cite{afm-aalmm-00}. The resulting tour consists of three parts,
see Figure \ref{milling} for an example, $L_{OPT}$ being the optimal
milling tour length:
\begin{itemize}
\item[(1)] a ``boundary'' part: $B \subset P$ is the inward
offset region of all points within P that are feasible placements
for the center of the milling cutter. For a milling problem, $B$ is
connected. Tracing the boundary $\delta B$ of B, let $P_{\delta B}$
denote the region milled by this route. ($\delta B$ may not be
connected (if $P$ features holes), the pieces are $\delta B_i$.) The
length of $\delta B$, $L_{\delta B}$ is a lower bound on $L_{OPT}$.
\item[(2)] a ``strip'' part: $P_{int} := P \backslash P_{\delta B}$---if nonempty---can
be covered by a set of $k$ horizontal strips $\Sigma_i$. The
$y$-coordinates of two strips differ by multiples of $2$. Then, let
$L_{str} = \sum_{i=1}^k L_{\Sigma_{i}}$ and this is again a lower
bound on the length of an optimal milling tour: $L_{OPT} \geq
L_{str}$.
\item[(3)] a ``matching'' part: the strips and the boundary tour have
to be combined for a tour. For that purpose, consider the endpoints
of strips on $\delta B_i$: every $\delta B_i$ contains an even
number of such endpoints. Hence, every $\delta B_i$ is partitioned
into two disjoint portions, $M_1(\delta B_i)$ and $M_2(\delta B_i)$.
Using the shorter of these two ($M_*(\delta B_i)$) for every $\delta
B_i$ we obtain for the combined length, $L_M$: $L_M \leq L_{str}/2
\leq L_{OPT}/2$.
\end{itemize}
The graph with endpoints of strip lines plus the points where strip
line touches a $\delta B_i$ as vertices is connected by three types
of edges---the center lines of the strips, the $\delta B_i$s and the
$M_*(\delta B_i)$s. Every vertex has degree 4, hence, an Eulerian
tour gives a feasible solution.

\begin{figure}[h]
\centering
\includegraphics[width=.6\textwidth]{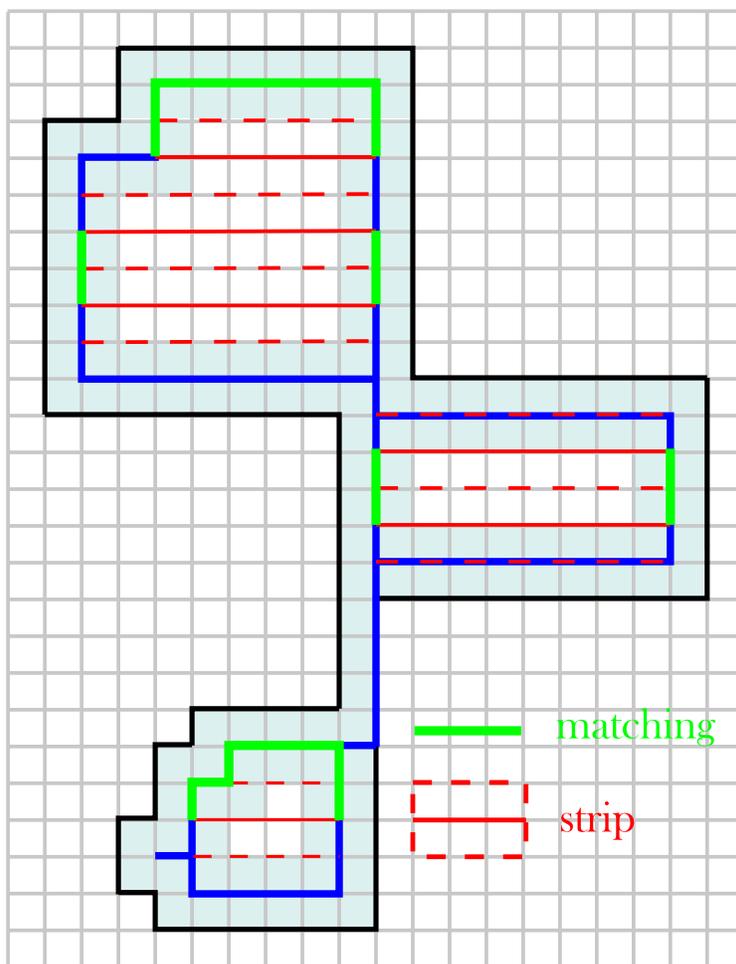}
 \caption{\label{milling}\small A polygon $P$, consisting of pixels.
 The parts for the approximative milling tour are indicated:
 region $P_{\delta B_i}$ is shaded (light blue), $\delta B_i$ is highlighted in blue.}
\end{figure}

Now we describe how to construct a covering set of scan points:
\begin{enumerate}
\item Let $S_{4e}$ be the ``even quadruple'' centers of all 2x2-squares that are
fully contained in $P$, and which have two even coordinates.
\item Remove all 2x2-squares corresponding to $S_{4e}$ from
$P$; in the remaining polyomino $P_{4e}$, greedily pick a maximum
disjoint set $S_{4o}$ of ``odd quadruple'' 2x2-squares.
\item Remove all 2x2-squares corresponding to $S_{4o}$ from
$P_{4e}$; greedily pick a maximum disjoint set $S_{3}$ of ``triple''
2x2-squares that cover 3 pixels each in the remaining polyomino
$P_{4e,4o}$,
\item Remove all 2x2-squares corresponding to $S_3$ from
$P_{4e,4o}$; in the remaining set $P_{4e,4o,3}$ of pixels, no three
can be covered by the same scan. Considering edges between pixels
that can be covered by the same scan, pick a minimum set of
(``double'' $S_2$ and ``single'' $S_1$) scans by computing a maximum
matching.
\end{enumerate}


\begin{figure}[h!]
\centering
\includegraphics[width=\columnwidth]{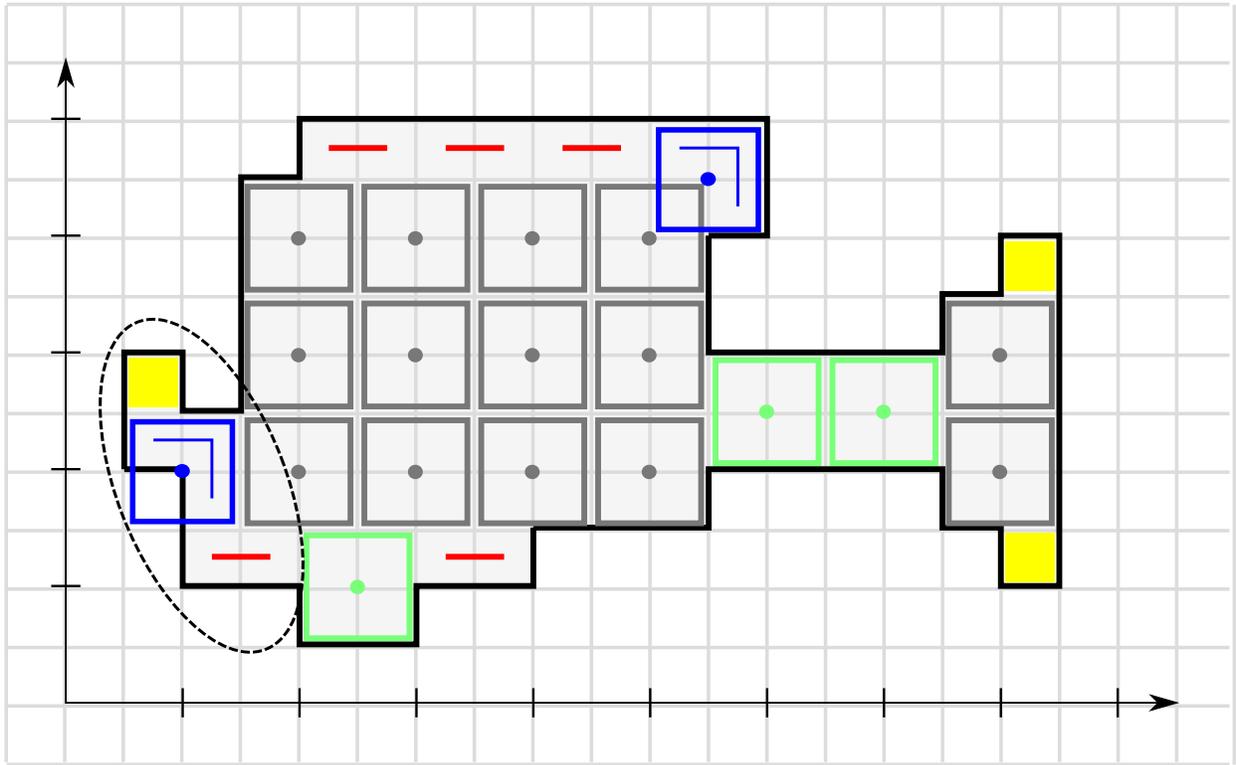}
 \caption{\label{example}\small An example for our approximation method:
The set of ``even quadruple''scans is shown in grey; the ``odd
quadruple'' scans are green. A possible (greedy!) set of ``triple''
scan is shown in blue, leaving the maximum matching (and the
corresponding ``double scans'') shown in red. The leftover single
pixels are yellow. The ellipse indicates a part that is covered by
three scans instead of two: the triple scan with adjacent single and
double scans could be covered by two triple scans.}
\end{figure}

\begin{claim}
The total number of scans is at most 2.5 times the size of a minimum
cardinality scan set.
\end{claim}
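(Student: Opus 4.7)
The plan is a local fractional charging argument. Assign each pixel $p \in P$ the weight $w(p) = 1/k$, where $k$ is the number of pixels of $P$ covered by the ALG scan containing $p$, so that $|ALG| = |S_{4e}| + |S_{4o}| + |S_3| + |S_2| + |S_1| = \sum_{p \in P} w(p)$. The key lemma I would establish is that for every scan in a minimum cardinality cover $OPT$, with $2\times 2$ window $Q$, the inequality $\sum_{p \in Q \cap P} w(p) \leq 5/2$ holds. Summing this over all OPT scans and using that each pixel of $P$ is covered by at least one OPT scan then yields $|ALG| = \sum_{p \in P} w(p) \leq \sum_{Q \in OPT}\sum_{p \in Q \cap P} w(p) \leq 2.5 \cdot |OPT|$, as claimed.

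To prove the local inequality I would case-analyze on $|Q \cap P| \in \{1,2,3,4\}$ together with the types of ALG scans covering the pixels of $Q \cap P$. Two structural facts make the enumeration manageable. First, if $|Q \cap P| = 4$, then $Q$ is itself a feasible quadruple candidate: either $Q \in S_{4e}$ (when $Q$ has even corner coordinates), or step~2 considered $Q$ and some already-chosen quadruple overlaps it; in either case, at least one pixel of $Q$ has weight $1/4$. Second, because $S_2 \cup S_1$ is produced by a maximum matching, no two singles can be adjacent in the matching graph (otherwise, adding their edge would augment it). Since the scan at the center of $Q$ covers all pixels of $Q \cap P$ simultaneously, any two of them are adjacent in the matching graph, so at most one pixel of $Q \cap P_{4e,4o,3}$ can be a single. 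An analogous greedy-maximality argument for $S_3$ limits the number of pixels of $Q$ that can contribute weight above $1/3$.

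Combining these restrictions in each subcase bounds the per-window total by $5/2$; the tightest configuration is roughly one pixel in a chosen quadruple (weight $1/4$), one single (weight $1$), and two further pixels of weight at most $1/2$ each, giving $1/4 + 1 + 1 = 9/4$. The slack to the claimed $5/2$ absorbs local losses from greedy choices like the one highlighted in Figure~\ref{example}, where a suboptimal triple placement forces extra singles nearby. The main obstacle is not any single clever idea but the exhaustive case analysis: one must verify, across all windows $Q$ and all distributions of ALG scan types over $Q \cap P$, that the maximality of the quadruple and triple steps together with the maximum matching jointly rule out every configuration whose local weight would exceed $5/2$.
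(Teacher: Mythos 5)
Your argument is correct, but it takes a genuinely different route from the paper's. The paper bounds the two halves of the algorithm's output separately and adds the bounds: the quadruple scans $S_{4e}\cup S_{4o}$ correspond to disjoint $2\times 2$ squares inside $P$, so a simple area argument gives $|S_{4e}\cup S_{4o}|\le s_{\min}$; then a local-improvement analysis of the greedy triples plus maximum matching on the residual polyomino $P_{4e,4o}$ (the only possible improvements being a triple, a double and a single replaced by two triples, or a triple and two singles replaced by a triple and a double) gives $|S_3\cup S_2\cup S_1|\le \frac{3}{2}s^4_{\min}\le\frac{3}{2}s_{\min}$, for a total of $\frac{5}{2}s_{\min}$. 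You instead charge every algorithm scan fractionally to the pixels it is responsible for and bound the weight landing in each optimal $2\times 2$ window; your per-window case analysis rests on exactly the right structural facts (a full $2\times 2$ window must contain a quadruple-covered pixel, by maximality of $S_{4e}\cup S_{4o}$; at most one single per window, by maximality of the matching; and at least one triple-covered pixel among any three window pixels surviving into $P_{4e,4o}$, by maximality of $S_3$), and I have checked that the enumeration closes. Two remarks. First, your ``tightest configuration'' $1/4+1+1/2+1/2=9/4$ is in fact infeasible: a single and two doubles all lie in $P_{4e,4o,3}$, and three pixels of $P_{4e,4o,3}$ inside one window would contradict the maximality of $S_3$; applying that observation throughout, your method actually yields the sharper per-window bound $1/4+1/3+1+1/2=25/12<5/2$, i.e., a better constant for the scan count than the paper's additive $1+\frac{3}{2}$ split. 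Second, you should state explicitly that the weight of a pixel is determined by the scan to which the algorithm \emph{assigns} it (the algorithm partitions the pixels, so this is well defined even though the $2\times 2$ windows of distinct chosen scans may overlap), and that, as in Lemma~\ref{gridscans}, one may assume the minimum-cardinality cover consists of grid-centered windows so that every pixel lies wholly inside some optimal window. Neither point affects the correctness of your proof.
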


\begin{claim}
All scan points lie on a 2.5-approximative milling tour.
\end{claim}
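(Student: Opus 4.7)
My plan is to construct the required milling tour by adapting the three-part 2.5-approximation of Arkin, Fekete and Mitchell: trace the boundary $\delta B$ of the inner offset region $B$, cover $P_{int}$ by a set of horizontal strips $\{\Sigma_i\}$ whose $y$-coordinates differ by multiples of two, and match strip endpoints back to $\delta B$ to form an Eulerian graph. The length analysis recalled in the text then gives $L(T)\le L_{\delta B}+1.5\,L_{str}\le 2.5\,L_{OPT}^{\text{milling}}$, so the only new work is to show that every scan point produced in Step I can be forced onto such a tour.

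I would begin by noting that, by Lemma~\ref{gridscans} and the explicit construction, every element of $\mathcal{S}=S_{4e}\cup S_{4o}\cup S_3\cup S_2\cup S_1$ is an integer grid point at the center of some $2\times 2$ square contained in $P$, and hence lies in $B$. Scan points that happen to lie on $\delta B$ are visited automatically by the boundary tour. For scan points strictly inside $B$, I would choose the $y$-coordinates of the AFM strips so that each such point is pierced by a strip: because the strips are spaced two apart, picking the starting parity of the first strip allows me to catch all interior $S_{4e}$ scans (even $y$) or all interior $S_{4o}$ scans (odd $y$) within a single strip family, while preserving the standard coverage of $P_{int}$.

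The main obstacle is therefore the mixed parity of interior scan points, since a single strip family cannot simultaneously catch both $S_{4e}$ and $S_{4o}$ points in $P_{int}$. I expect to resolve this in one of two ways: either by arguing from the greedy nature of Steps 1--3 of the cover construction that the interior of $P_{int}$ contains scan points of only one parity, with the other parity absorbed into $P_{\delta B}$; or by allowing a second strip family at the opposite parity and charging its total length against an independent lower bound of the same flavor as $L_{str}$. The scan points of $S_3$, $S_2$, $S_1$ lie close to $\partial B$ and would be captured either by $\delta B$ itself or by short local detours whose cost is amortized into the matching part $L_M$ of the tour. Combined with the preceding claim's $2.5$-bound on the number of scans, and the fact that any covering scan set of the MWPDV optimum yields a feasible milling tour (so that $L_{OPT}^{\text{milling}}$ is, up to the same constant, comparable to the tour-length component of the MWPDV optimum), this yields the promised bicriteria $2.5$-approximation.
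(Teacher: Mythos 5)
Your setup matches the paper's: reuse the Arkin--Fekete--Mitchell three-part tour and place the strip family on even $y$-coordinates so that the strips pass through the points of $S_{4e}$. But the heart of the claim is precisely the step you leave open---how the remaining scan points $S_{4o}\cup S_3\cup S_2\cup S_1$ get onto the tour---and neither of your two proposed escapes works as stated. The second one (a second strip family at the opposite parity, charged to ``an independent lower bound of the same flavor as $L_{str}$'') blows the budget: the bound $L_{\delta B}+L_{str}+L_M\leq L_{OPT}+L_{OPT}+\frac{1}{2}L_{OPT}$ already spends the strip lower bound in full, so a second strip family would push the factor to $3.5$. Likewise, your plan to absorb ``short local detours'' for $S_3,S_2,S_1$ into the matching part has no slack to draw on, since $L_M$ is already charged at $L_{str}/2$.

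The missing idea, which is the actual content of the paper's proof, is a structural fact that requires no parity analysis and costs no extra length: Step~1 of the cover construction takes \emph{every} even-centered $2\times 2$ square that fits inside $P$, so any pixel surviving into $P_{4e}$ lies in an even $2\times 2$ block that protrudes from $P$ and is therefore adjacent to the boundary, i.e., contained in the boundary region $P_{\delta B}$. Consequently all scans of $S_{4o}\cup S_3\cup S_2\cup S_1$ are centered at grid points on $\delta B$ and are picked up by the boundary part of the tour for free; the one exception, the center of a triple scan sitting at a reflex vertex, is handled by locally rerouting $\delta B$ through that vertex at \emph{zero} additional length (Figure~\ref{nrv}), not by an amortized detour. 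Your first alternative gestures at this (``the other parity absorbed into $P_{\delta B}$'') but attributes it to the greedy nature of Steps 1--3 and leaves it as a hope rather than an argument. Note also that your opening assertion---that every scan point is the center of a $2\times 2$ square contained in $P$ and hence lies in $B$---is false for $S_3$, $S_2$, $S_1$, whose scan squares may stick out of $P$; that is exactly why the reflex-vertex rerouting is needed in the first place.
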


For the first claim, let $s_{\min}$ be the size of a cardinality
scan set. Observe that $S_{4e}\cup S_{4o}$ corresponds to a set of
disjoint 2x2-squares that are fully contained in $P$; hence, it
follows from a simple area argument that $|S_{4e}\cup S_{4o}|\leq
s_{\min}$.

When considering the set of pixels in $P_{4e,4o}$, not more than
three can be covered by the same scan; in $P_{4e,4o,3}$ we compute
an optimal solution. This implies that the only way to get a smaller
cover in  $P_{4e,4o}$ is to change the choice of triple scans; this
means that the pixels covered by a triple scan have to be allocated
differently. Taking into account that $P_{4e,4o}$ does not contain
any 2x2-squares, a simple case analysis shows that only two possible
improvements are possible:
\begin{itemize}
\item replacing a triple, a double and a single by two triples (as
shown in the left part of Figure~\ref{example}), or
\item replacing a triple and two singles by a triple and a double.
\end{itemize}
With $s^4_{\min}$ being the minimum number of scans required for
covering $P_{4e,4o}$, we get $|S_{3}\cup S_{2}\cup S_{1}|\leq
\frac{3}{2} s^4_{\min}\leq \frac{3}{2} s_{\min}$. In total we get a
solution with not more than $\frac{3}{2} s_{\min}$ scans.

For the second claim, we use a milling tour constructed as in
\cite{afm-aalmm-00} and described above: Choosing the strips to be
centered on even $y$-coordinates allows us to visit all scan points
in $S_{4e}$. Clearly, all pixels in $P_{4e}$ are adjacent to the
boundary of other scans involve boundary pixels, allowing them to be
visited along the ``boundary'' part. (One minor technical detail is
shown in Figure~\ref{nrv}: In order to visit the center of a triple
scan, we need to reroute the boundary part of the tour to run
through a reflex vertex; this does not change the tour length.)

\begin{figure}[h]
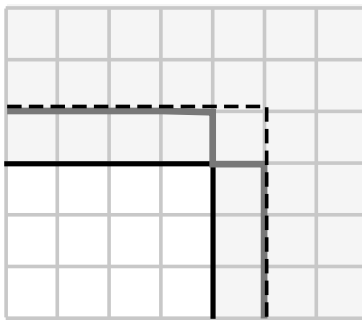

\centering
    \comicII{.3\textwidth}{rv_path}
 \caption{\label{nrv}\small Modifying the boundary part of $W(t)$:
 at a reflex vertex, the gray path is used instead of $\delta B_i$.}
\end{figure}

This concludes the proof.  We summarize:

\begin{theorem}\label{app}
A polyomino $P$ allows a MWPDV with rectangular vision solution that
contains at most 2.5 times the minimum number of scans necessary to
scan the polygon, and has tour length at most 2.5 times the length
of an optimum milling tour.
\end{theorem}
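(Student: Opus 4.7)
The plan is to verify the two separate claims stated in the excerpt and then combine them, since together they give both a $1.5$-approximation on the number of scans and a $2.5$-approximation on the tour length (the worse of these two factors controls the combined cost, and each bound is simultaneously respected, which also handles the bicriteria version).

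First, I would establish Claim~1 by analyzing the greedy scan construction. By Lemma~\ref{gridscans} we may restrict attention to grid-point scans. Each scan covers a $2\times 2$ block of pixels, and the constructed sets partition into four layers:
\begin{itemize}
\item $S_{4e}$ and $S_{4o}$: scans covering four pixels (each scan uses a $2\times 2$ square fully inside $P$).
\item $S_3$: scans covering three pixels in the residual polyomino $P_{4e,4o}$.
\item $S_2 \cup S_1$: scans covering two or one pixels, chosen via a maximum matching on the remaining pixels $P_{4e,4o,3}$.
\end{itemize}
Since $S_{4e}\cup S_{4o}$ corresponds to a disjoint packing of $2\times 2$ squares fully in $P$, an area/disjointness argument gives $|S_{4e}\cup S_{4o}|\le s_{\min}$: any covering must spend at least one scan per disjoint $2\times 2$ block. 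Writing $s^4_{\min}$ for the minimum number of scans needed to cover $P_{4e,4o}$, the key observation is that $P_{4e,4o}$ contains no $2\times 2$ block, so every scan in an optimum for $P_{4e,4o}$ covers at most three pixels; a local improvement analysis shows that only two local patterns can replace part of our greedy solution, namely (triple, double, single) $\to$ (triple, triple) and (triple, single, single) $\to$ (triple, double). From this follows $|S_3 \cup S_2 \cup S_1| \le \tfrac{3}{2}s^4_{\min} \le \tfrac{3}{2}s_{\min}$, so the total scan count is at most $\tfrac{3}{2}s_{\min}$ (well within the claimed $2.5$ factor).

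Next, I would establish Claim~2 by constructing the milling tour along the lines of the Arkin--Fekete--Mitchell $2.5$-approximation described in the excerpt. I take the horizontal strips $\Sigma_i$ centered on even $y$-coordinates; with this choice the tour traverses the horizontal center line of every $2\times 2$ even block, so all scan points in $S_{4e}$ lie on the strip part. All remaining scan points (those in $S_{4o}$, $S_3$, $S_2$, $S_1$) live in $P_{4e}$, hence are adjacent to the boundary of $B$ and are visited along the boundary part $\delta B_i$. The only technical caveat, illustrated in Figure~\ref{nrv}, is that the boundary route may need a local detour through a reflex vertex in order to touch the center of a triple scan; this detour preserves length because it replaces a unit corner by an equivalent unit detour. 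Combined with the matching part bound $L_M\le L_{str}/2\le L_{OPT}/2$ from the excerpt, the resulting tour length is $L_{\delta B}+L_{str}+L_M\le L_{OPT}+L_{OPT}+L_{OPT}/2=\tfrac{5}{2}L_{OPT}$.

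Putting both pieces together yields the bicriteria bound: the number of scans is at most $\tfrac{3}{2}s_{\min}\le 2.5\,s_{\min}$ and the tour length is at most $2.5\,L_{OPT}$, which also implies a $2.5$-approximation for any nonnegative linear combination $c\cdot|\mathcal{S}|+L(T)$. The only step requiring genuine care is the local-improvement case analysis underlying Claim~1; the tour-length argument is essentially a careful application of the existing $2.5$-approximation for milling, with the small bookkeeping observation that the chosen strip parity and the reflex-vertex detour ensure every constructed scan point is actually visited.
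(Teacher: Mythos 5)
Your overall route is the same as the paper's: prove the two claims (scan count and tour length) separately, via the layered greedy scan construction of Section~\ref{mwpdv_rect} and the Arkin--Fekete--Mitchell boundary/strip/matching tour with strips on even $y$-coordinates and the reflex-vertex detour, then observe that the two simultaneous bounds give the bicriteria result and hence any nonnegative linear combination. The tour-length half of your argument is faithful to the paper's.

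There is, however, a concrete arithmetic error in how you combine the scan bounds. The two inequalities the construction yields are $|S_{4e}\cup S_{4o}|\leq s_{\min}$ and $|S_{3}\cup S_{2}\cup S_{1}|\leq \frac{3}{2}s^4_{\min}\leq\frac{3}{2}s_{\min}$, and the algorithm outputs the union of all five sets, so the correct conclusion is
\[
|S_{4e}\cup S_{4o}|+|S_{3}\cup S_{2}\cup S_{1}|\;\leq\; s_{\min}+\tfrac{3}{2}\,s_{\min}\;=\;\tfrac{5}{2}\,s_{\min},
\]
not $\frac{3}{2}s_{\min}$ as you assert. Your claim of a ``$1.5$-approximation on the number of scans'' silently drops the (up to $s_{\min}$ many) quadruple scans, and the remark that this is ``well within the claimed $2.5$ factor'' is therefore misleading: the factor $2.5$ is exactly the sum $1+\frac{3}{2}$, with no slack, and it is precisely where the theorem's constant comes from. (The paper's own prose contains the same slip --- ``in total we get a solution with not more than $\frac{3}{2}s_{\min}$ scans'' --- but its Claim and the theorem correctly state $2.5$.) Once this addition is corrected, your argument coincides with the paper's proof.
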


\section{Approximating Rectilinear MWPDV Milling for Circular Visibility
Range}\label{AppxCircular}

In this section we give an approximation algorithm for MWPDV milling
in case of a circular range of visibility. Again, $P$ is a
polyomino, we consider $r=1$ and the tour length is measured
according to the $L_1$-distance. The approximation is for the case
of no given starting point or in case of a given starting point
located on a grid point (in the polyomino or on its boundary).

When considering a circular scan range, one additional difficulty
are boundary effects of discrete scan points: While continuous
vision allows simply sweeping a corridor of width $2r$ by walking
down its center line, additional cleanup is required for the gaps
left by discrete vision; this requires additional mathematical
arguments, see Figure~\ref{circ_diff}.

\begin{figure}[h]
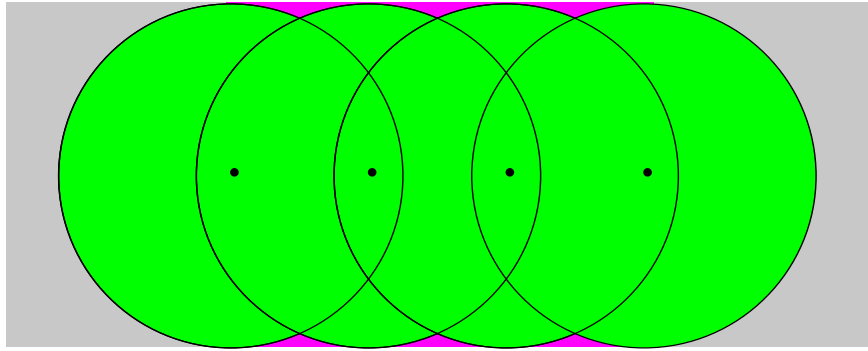

\centering
    \comicII{.7\textwidth}{circ_sideeff}
     \caption{\label{circ_diff}\small A finite number of scan
     points along the center line of a corridor of width $2r$
     does not suffice to cover the corridor with a circular scan range of radius $r$.}
\end{figure}

We overlay the polyomino with a point grid as in
Figure~\ref{interiorscans}, left, i.e., a diagonal point grid with
$L_2$-distance of $\sqrt{2}$ in-between points, and use all points
of this grid that coincide with a grid point in $P$
(Figure~\ref{interiorscans}, left). For the tour $T$ our exploration
strategy starts at a boundary grid point, proceeding
counterclockwise along the boundary and taking a scan at every point
of the overlayed grid located on the boundary. Only using this we
would end up with a tour covering an area around the boundary.

\begin{figure}[h]
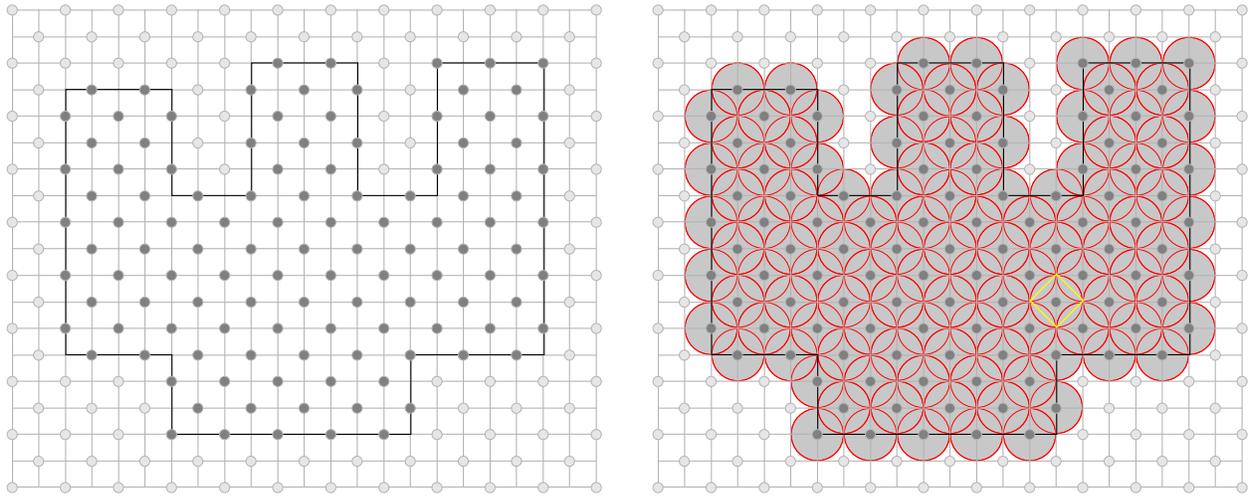

\centering
    \comicII{.48\textwidth}{interiorscans}
    \hfill
    \comicII{.48\textwidth}{interiorscans_circles_color}
 \caption{\label{interiorscans}\small Left: Point grid (light gray) with grid points within a polyomino (black) in dark gray.
 Right: Circular visibility ranges of the grid points covering the plane, one square of side length $\sqrt{2}$ is indicated in yellow.}
\end{figure}

For the movement in-between interior scan points and the boundary we
use horizontal strips located on grid lines (and distance $1$ to the
boundary), cp.~Figure~\ref{tour} for an example. In order to combine
these grid lines with the boundary path for a tour, we link strips
to the left boundary. Two of these will in general be linked on the
right-hand side. In case there is an odd number of strips between
the upper and lower boundary of $P$, the scan points located on the
bottommost strip are visited by using a path of ($L_1$-)length $2$,
from the boundary or strips with another y-coordinate of the
leftmost point, cp.~Figure~\ref{tour}, down right. The strips that
get linked are always determined by the leftmost boundary. In case
there are other parts of $P$ that have a left boundary (that is
vertical edges with polygon to the right and the exterior to the
left) whose cardinality of strip lines differ by an odd number, scan
points from the topmost strip are linked by two vertical steps of
length $1$ to the upper boundary, cp.~Figure~\ref{tour}, top right.

\begin{figure}[h]
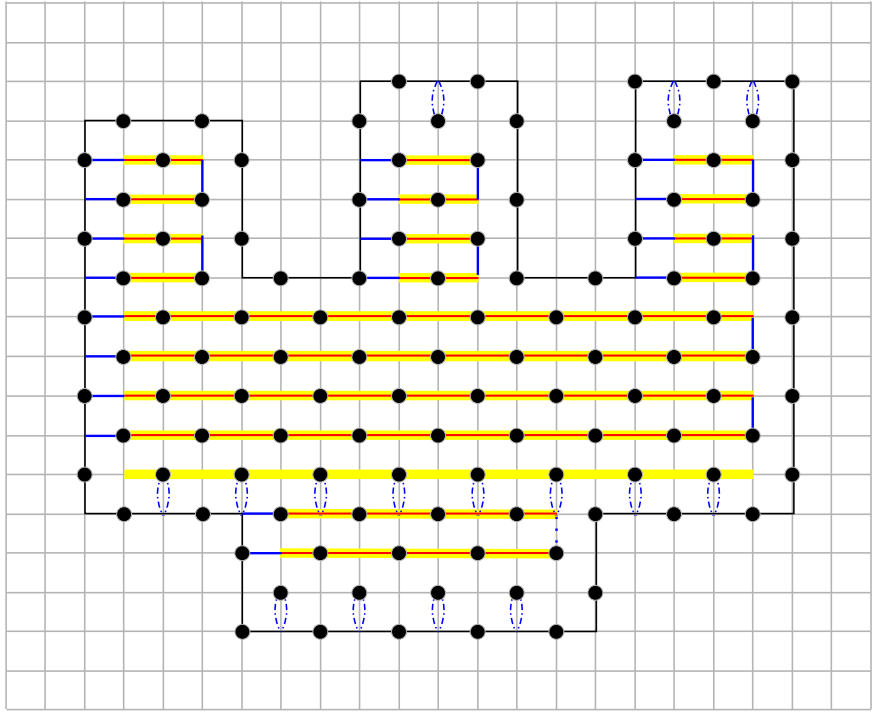

\centering
    \comicII{.7\textwidth}{interior_route_color}
    \caption{\label{tour}\small A polyomino $P$ with the tour given by our strategy. Scan points are displayed in black.
    The horizontal strips of total length $L_{strips}$ are indicated in yellow,
    the tour is in blue for the links to the boundary and in-between strips
    (solid) as well as for connections of points (dash-dotted),
    parts located on the strips are indicated in red.
    (In-between those parts the tour runs on the boundary.)}
\end{figure}

Hence, we yield a closed tour (linking always two strips we always
end up at the left boundary, the rest is a tour along the boundary
with small loops of length $2$). We still need to show that $P$ is
covered and have to consider the competitive ratio of our strategy.
Let $T$ be the tour determined by our strategy, and T* be an optimal
tour.

\begin{claim}
The scan points the strategy positions cover $P$.
\end{claim}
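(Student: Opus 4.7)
\noindent The plan is to reduce the covering claim to a purely local, pixel-by-pixel check. The key observation is that every pixel $\pi$ of $P$ (a unit square with integer corners, fully contained in $P$) has exactly two diagonally opposite corners whose coordinate sum has one fixed parity, and these two corners are precisely the diagonal-grid points among the four corners of $\pi$. Being integer points of $P$, both of these corners belong to the scan set $\mathcal{S}$ by construction. Hence the global claim reduces to the local one: \emph{the unit square $[0,1]^2$ is contained in $B((0,0),1)\cup B((1,1),1)$}, where $B(\cdot,1)$ denotes a closed $L_2$-disk of radius~$1$.

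The local claim follows from an elementary calculation. For $(x,y) \in [0,1]^2$ we have $x^2 \le x$ and $y^2 \le y$. If $x+y \le 1$, then $x^2+y^2 \le x+y \le 1$, so $(x,y)\in B((0,0),1)$. Otherwise, setting $u := 1-x$ and $v := 1-y$ (so $u,v \in [0,1]$ and $u+v < 1$), the same inequality yields $(1-x)^2+(1-y)^2 = u^2+v^2 \le u+v \le 1$, so $(x,y) \in B((1,1),1)$.

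Given $q \in P$, pick any pixel $\pi\subseteq P$ containing $q$; applying the local claim, after the integer translation that maps $[0,1]^2$ onto $\pi$, produces a diagonal-grid corner $p$ of $\pi$, hence $p \in \mathcal{S}$, at $L_2$-distance at most~$1$ from $q$. The segment $\overline{pq}$ lies inside $\pi \subseteq P$, so $q$ is visible from $p$ and is therefore covered by the scan at $p$.

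I do not foresee a real obstacle here. The only subtlety is ensuring that the covering scan point actually lies inside $P$, rather than simply being the nearest diagonal-grid point in the plane; the pixel-corner argument handles this automatically, regardless of holes in $P$ or whether $q$ lies in the interior or on the boundary of $P$. The boundary-hugging part of the tour plays no role in this particular claim, which concerns only the \emph{set} of scan points placed by the strategy.
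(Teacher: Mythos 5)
Your proof is correct and follows essentially the same route as the paper: the paper observes that each scan disk contains an inscribed square of side $\sqrt{2}$ with vertices at the non-grid integer points, which is exactly your local fact that the two diagonal-grid corners of a pixel cover its two triangles. Your version merely makes the computation explicit and is slightly more careful about why the covering scan points lie in $\mathcal{S}$ and why visibility holds, which the paper leaves implicit.
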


\begin{proof}
Including the scan points on the boundary the entire polyomino is
covered by the scan point grid defined above: The scan range being a
$L_2$-circle, these scans cover the interior of the polyomino---each
circle covers at least a square of side length $\sqrt{2}$ (with the
vertices located on grid points that do not belong to the overlayed
grid), cp.~Figure~\ref{interiorscans}, right. \qed
\end{proof}

\begin{lemma}\label{T}
$L(T) \leq 4\cdot L(T^*)+8$
\end{lemma}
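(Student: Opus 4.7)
The approach is to decompose the tour $T$ produced by the strategy into three accumulated pieces—the boundary traversal of $L_1$-length $L_B$, the interior horizontal strips of total length $L_S$, and the connectors $L_C$ (the pairwise vertical links that bind each pair of consecutive strips to the left boundary, together with the optional length-$2$ detour for the bottommost strip when the number of strips is odd)—and then to bound each piece against $L(T^*)$ separately.

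I would next derive two lower bounds on $L(T^*)$. For the boundary bound: since every scan disk has diameter $2$, each scan of $T^*$ can cover at most a length-$2$ segment of $\partial P$, so $T^*$ must use at least $\tfrac{1}{2}L_B$ scans; since scans lie at grid points by the argument of Lemma~\ref{gridscans} and consecutive scans in a tour therefore are at least one unit apart, this gives $L(T^*) \ge \tfrac{1}{2}L_B - O(1)$. For the strip bound: because consecutive interior strips are spaced exactly $2$ apart and each unit-radius disk has vertical diameter $2$, the admissible $y$-bands of scans covering different strips are disjoint, so the scans of $T^*$ partition strip-by-strip. Within each strip the covering scans must span its full horizontal extent $\ell_i$, and summing the horizontal $L_1$-movements of $T^*$ across all $y$-levels yields $L(T^*) \ge L_S - O(1)$.

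The connector part is easier: every pair of strips is joined by exactly two vertical units, and these telescope to at most a single vertical sweep of $P$, whose height is already bounded by $\tfrac{1}{2}L_B$; the odd-strip detour contributes only a constant. Hence $L_C \le \tfrac{1}{2}L_B + O(1)$. Putting the three estimates together,
\begin{equation*}
L(T) \;\le\; L_B + L_S + L_C \;\le\; \tfrac{3}{2}L_B + L_S + O(1) \;\le\; 3\,L(T^*) + L(T^*) + 8 \;=\; 4\,L(T^*) + 8,
\end{equation*}
where the additive $8$ absorbs the scattered $O(1)$ errors coming from boundary offsets, corner effects, and the odd-strip detour.

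The main obstacle is the strip lower bound $L(T^*) \ge L_S - O(1)$ when $P$ has holes or reentrant corners so that a single $y$-level contains several disjoint strip segments. I would handle this by applying the horizontal-projection argument to each maximal run at $y = y_i$ separately and charging the resulting endpoint losses to the corresponding reflex features of $\partial P$, which have already been paid for in $L_B$; this keeps the total additive loss bounded by a constant and preserves the claimed factor of $4$.
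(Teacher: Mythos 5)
Your decomposition into boundary, strips, and connectors matches the paper's, but both lower bounds you derive on $L(T^*)$ are flawed, and the arithmetic does not actually close at the factor $4$. First, the boundary bound: you lower-bound $L(T^*)$ by counting scans needed to cover $\partial P$ and then asserting that consecutive scans are at least one unit apart "since scans lie at grid points by the argument of Lemma~\ref{gridscans}." That lemma is proved only for rectangular ($L_\infty$) visibility; for circular scans of radius $1$ it is false --- a single $1\times 1$ pixel can be covered by one scan only if the scan point is near the pixel's center (a corner grid point is at distance $\sqrt{2}>1$ from the opposite corner), which is exactly why the paper's Lemma~\ref{S} treats $N(P)=1$ separately. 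Moreover, even if patched, your argument yields only $L(T^*)\gtrsim \tfrac{1}{2}L_B$, whereas the paper uses the much stronger milling lower bound: any covering tour must essentially trace the inward offset boundary $\delta_1$, so $L(T^*)\ge L_{\delta 1}$, combined with the exact identity $L_{bound}=L_{\delta 1}+8$ (from $n=2r+4$ for orthogonal polygons, Lemma 2.12 of \cite{o-agta-87}); this is where the additive $8$ comes from, not from absorbing scattered $O(1)$ terms.

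Second, the strip bound: the strategy's strips lie on \emph{every} grid line (the diagonal scan grid of spacing $\sqrt{2}$ places interior scan points on every row at distance $1$ apart), not on lines spaced $2$ apart. Your "disjoint $y$-bands" argument therefore fails --- adjacent strips at distance $1$ can be served by the same radius-$1$ disk --- and the correct bound, as in the paper, is $L_{strips}\le 2\cdot L_{str}\le 2\cdot L(T^*)$, not $L_S\le L(T^*)+O(1)$. Plugging the corrected bounds into your own final inequality gives $\tfrac{3}{2}L_B+L_S\le 3L(T^*)+2L(T^*)=5L(T^*)$, overshooting the claimed factor. The paper instead charges the connectors to one extra copy of $L_{\delta 1}$ (left-hand links rotated onto $\delta_1$, length-$2$ loops swung onto unused strips), obtaining $L(T)\le L_{bound}+L_{strips}+L_{\delta 1}\le 2L_{\delta 1}+8+2L(T^*)\le 4L(T^*)+8$. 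To repair your proof you would need to replace the scan-counting boundary bound by the milling bound $L(T^*)\ge L_{\delta 1}$ and account for the strips at their actual spacing.
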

\begin{proof}
Our tour consists of the tour along the boundary of $P$, the strips
and connections. Let $L_{bound}$ be the length of the boundary,
i.e., $P$'s perimeter. Moreover, we define $\delta_1$:
\begin{itemize}
\item to be $\delta_{Bi}$ (cp.~Section~\ref{mwpdv_rect}) whenever the
corresponding cutter fits into $P$,
\item in-between these $\delta_{Bi}$ a shortest path on the grid,
\item plus (for corridors of $P$ of width $1$) shortest path from
the $\delta_{Bi}$s such that every pixel of $P$ is visited by a part
of $\delta_1$ (a pixel is visited if one of its vertices or edges is
reached).
\end{itemize}
Finally, let the length of all strips be $L_{strips}$.

We need to show that these three elements cover the length of our
chosen path. Movements along the boundary are covered by
$L_{bound}$, movements along strips by $L_{strips}$. The right-hand
side links of two adjacent strips lie on $\delta_1$, the left-hand
side connections can be turned by $90\degree$ (only for charging)
and lie on $\delta_1$ as well. The scan point connecting paths of
length $2$ can be swung open and lie on the strip of the relative
(not used) strip or one part (length $1$) on $\delta_1$ for interior
strip points or endpoints of strips, respectively. Thus, the
combined lengths of these are an upper bound for our tour length.

Using the fact that for an orthogonal polygon of $n$ vertices, $r$
of which are reflex, $n=2r+4$ holds, see \cite{o-agta-87}, Lemma
2.12, we have $L_{bound} = L_{\delta 1} + 8$. Furthermore, with
$L_{str}$ as defined as in Section~\ref{mwpdv_rect}, we have
$L_{strips} \leq 2\cdot L_{str} \leq 2\cdot L(T^*)$. Finally, the
milling argument shows $L_{\delta 1} \leq L(T^*)$. Altogether, we
have: $L(T) \leq L_{bound} + L_{strips} + L_{\delta 1} \leq 2\cdot
L_{\delta 1} + 8 + L_{strips} \leq 4\cdot L(T^*) + 8$. \qed
\end{proof}

\begin{lemma}\label{S}
$|{\cal S}(T)| \leq 4\cdot |{\cal S}(T^*)|$
\end{lemma}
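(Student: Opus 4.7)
The plan is a direct charging argument from the algorithm's scan points to the optimal ones. By construction the algorithm's scan set is exactly $\mathcal{S}(T)=\Lambda\cap P$, where $\Lambda=\{(x,y)\in\mathds{Z}^2:x+y\text{ even}\}$ is the diagonal grid used by the strategy. Since every $g\in\Lambda\cap P$ lies in $P$, the covering property of $\mathcal{S}(T^*)$ supplies at least one $p(g)\in\mathcal{S}(T^*)$ with Euclidean distance $|gp(g)|\leq 1$ and $gp(g)\subset P$. Fixing one such $p(g)$ for each $g$ defines a map $\phi:\mathcal{S}(T)\to\mathcal{S}(T^*)$, so the lemma reduces to a uniform bound on the fibres of $\phi$.

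The key geometric fact I would establish next is that any closed Euclidean disk of radius $1$ contains at most four points of $\Lambda$. Rotating by $45^\circ$ and rescaling by $1/\sqrt{2}$ sends $\Lambda$ to the standard integer lattice $\mathds{Z}^2$ and the disk to one of radius $1/\sqrt{2}$, i.e., diameter $\sqrt{2}$. Any two integer points inside such a disk lie at $L_2$-distance at most $\sqrt{2}$, hence at distance exactly $0$, $1$, or $\sqrt{2}$. A short case analysis shows that up to translation every set of integer points with this property is a subset of the four corners of a unit square; equality is attained only when the disk is centered at the square's midpoint, which in the original coordinates is an odd-parity integer point whose four $\Lambda$-neighbours $(i\pm 1,j)$ and $(i,j\pm 1)$ lie precisely on the boundary of the disk. (A quick way to rule out five points is to note that five integer points with pairwise distance $\leq\sqrt{2}$ would necessarily contain two at distance $2$, since at most four lattice points fit in any axis-aligned $1\times 1$ box and any larger span of a $5$-point integer configuration exceeds $\sqrt{2}$.)

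Combining the two observations, for every $p\in\mathcal{S}(T^*)$ the fibre $\phi^{-1}(p)$ is a subset of $\Lambda$ contained in the closed unit disk around $p$, so $|\phi^{-1}(p)|\leq 4$. Summing gives
\[
|\mathcal{S}(T)| \;=\; \sum_{p\in\mathcal{S}(T^*)}|\phi^{-1}(p)| \;\leq\; 4\,|\mathcal{S}(T^*)|,
\]
which is exactly the claim. The only real obstacle is the geometric counting in the middle step; visibility and line-of-sight constraints can only shrink the set of $\Lambda$-points that a given optimal scan actually covers, so they cannot spoil the upper bound of four and are harmless for the charging.
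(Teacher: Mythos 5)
Your proof is correct, but it is a genuinely different argument from the one in the paper. The paper argues globally via area and density: it upper-bounds $|{\cal S}(T)|$ by $N(P)+1$ (where $N(P)$ is the number of pixels), lower-bounds $|{\cal S}(T^*)|$ by $\frac{2\sqrt{3}}{9}\,N(P)$ using Kershner's theorem on the minimum density of coverings of the plane by congruent circles, and checks that the ratio $\frac{N(P)+1}{\frac{2\sqrt{3}}{9}N(P)}$ drops below $4$ once $N(P)\geq 2$, handling $N(P)=1$ separately. You instead argue locally by charging: each placed scan point lies in $P$, hence within unit distance of some optimal scan point, and a closed unit disk contains at most four points of the $\sqrt{2}$-spaced diagonal lattice (your lattice-counting step is sound --- five integer points at pairwise distance at most $\sqrt{2}$ would have to fit in a $1\times 1$ axis-aligned box; the only minor caveat is that the paper's overlay lattice may be the odd-parity translate of your $\Lambda$, which changes nothing). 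Your route is more elementary and self-contained: it avoids invoking a covering-density theorem for a bounded region, produces the clean constant $4$ with no additive slop, and needs no small-instance case analysis; it does, however, depend on the specific lattice placement of the algorithm's scans, whereas the paper's density argument only needs a count of the placed scans and is the style of reasoning that the authors reuse for general radius $r$ in Section 6.
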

\begin{proof}
Let $N(P)$ be the number of pixels of a polyomino $P$. Moreover, let
$V(N(P))$ be the maximum ratio $\frac{|{\cal S}(T)|}{|{\cal
S}(T^*)|}$ for all polyominos with $N(P)$ pixels. Kershner
\cite{tnccs-k-39}(cp.~\cite{dkdk-t-49}) showed (with the notation
introduced above) that $D(r) = \frac{\pi r^2 |{\cal S}(T^*)|}{N(P)}
\geq \frac{2\sqrt{3} \pi}{9}$, that is, for $r=1$: $|{\cal S}(T^*)|
\geq \frac{2\sqrt{3}}{9} \cdot N(P)$. The placement of scan points
on the diagonal grid allows us to bound the number of scan points by
$2 + (N(P) - 1) = N(P) + 1$. If we define $F(N(P)) :=
\frac{N(P)+1}{\frac{2\sqrt{3}}{9} \cdot N(P)}$, this gives us an
upper bound on $V(N(P))$. $F(n)$ is monotonically decreasing in $n$.
Furthermore, $F(n) = 4 \Leftrightarrow n =
\frac{3\sqrt{3}}{8-3\sqrt{3}} \approx 1.85322$, that is:
\begin{equation}
V(N(P)) \leq F(N(P)) \leq 4  \mbox{ for } N(P) \geq 2
\end{equation}
(For $N(P)=1$ the optimum needs at least one scan, we need at most
$2$.) \qed
\end{proof}

\begin{theorem}
A polyomino $P$ allows a MWPDV solution for a circular visibility
range with $r=1$ that is $4$-competitive.
\end{theorem}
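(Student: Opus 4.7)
The plan is to assemble the three preceding statements — the coverage claim together with Lemmas~\ref{T} and \ref{S} — into the stated approximation guarantee. The construction in the preceding pages has already produced, for a given polyomino $P$, a closed tour $T$ together with a scan set $\mathcal{S}(T)$. The coverage claim certifies that $\mathcal{S}(T)$ scans every point of $P$, so $T$ is a feasible MWPDV solution; what remains is a comparison against an optimum feasible tour $T^*$.

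The core of the argument is essentially a one-line combination. By Lemma~\ref{S} we have $|\mathcal{S}(T)| \leq 4\,|\mathcal{S}(T^*)|$, and by Lemma~\ref{T} we have $L(T) \leq 4L(T^*) + 8$. Applying these to the MWPDV objective gives
\[
t(T) \;=\; c\,|\mathcal{S}(T)| + L(T) \;\leq\; 4c\,|\mathcal{S}(T^*)| + 4L(T^*) + 8 \;=\; 4\,t(T^*) + 8 .
\]
In the bicriteria sense this is a factor-$4$ guarantee in each coordinate (up to the additive $8$ in the length), and for the combined objective it is a factor-$4$ guarantee up to an additive constant.

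The only minor wrinkle is the additive term $+8$, which is inherited from the orthogonal-polygon identity $L_{\mathrm{bound}} = L_{\delta 1} + 8$ used inside Lemma~\ref{T}. I would absorb it by noting that any feasible polyomino instance admits a perimeter-based lower bound of a modest constant on $L(T^*)$, so the additive contribution is dominated by the multiplicative factor for all but a handful of trivial polyominoes, which can be checked by inspection. I do not expect any substantive obstacle: all the real technical work has already been carried out in Lemmas~\ref{T} and \ref{S} (the milling-style length argument and the Kershner-based density argument, respectively), and the theorem itself should be no more than a short paragraph chaining these two inequalities and pointing back to the coverage claim.
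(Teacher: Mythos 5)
Your proposal matches the paper's own proof: the paper simply chains Lemma~\ref{T} and Lemma~\ref{S} to obtain $t(T) \leq 4\,t(T^*) + 8$ and leaves the additive constant in place (as is standard for competitive guarantees). Your extra remark about absorbing the $+8$ is a harmless addition but not needed.
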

\begin{proof}With Lemmas \ref{T} and \ref{S} we have:
\begin{eqnarray*}
t(T) & = & c\cdot |{\cal S}(T)| + L(T)\\
 & \leq & c\cdot 4\cdot |{\cal S}(T^*)| + 4\cdot L(T^*) + 8\\
 & = & 4\cdot (c\cdot |{\cal S}(T^*)| + L(T^*)) + 8.
\end{eqnarray*} \qed
\end{proof}

\section{Approximating General MWPDV Milling for a Circular Visibility
Range}\label{gencirc} In this section we discuss MWPDV milling for a
circular visibility range $r$ in general polygons. As discussed in
Section 1, even the problem of minimum guard coverage has no known
constant-factor approximation; therefore, we consider a bounded
ratio $r/a$ between visibility range and feature size, i.e., minimum
side length.

 Just as in the rectilinear case for a rectilinear scan
range, see Section~\ref{mwpdv_rect}, our approximation proceeds in
two steps:
\begin{enumerate}
\item[(I)] Construct a set of scan points that is within a constant factor of
a covering set of minimum cardinality.
\item[(II)]Construct a tour that contains all constructed scan points and
is within a constant factor of the cost of an optimum milling tour.
\end{enumerate}


We start with a description of the second step, which will form the
basis for the placement of scan points. Just as in the rectilinear
case, we consider three parts.
\begin{itemize}
\item[(1)] A ``boundary'' part:
Above we described tracing $\delta B$ the boundary of $B$, causing a
tour length of $L_{\delta B}$. Here, we use two ``boundary tours''
within distance of (at most) $\frac{1}{2}r$ and (at most)
$\frac{3}{2}r$ to the boundary, $TR1$ and $TR2$ of length $L_{TR1}$
and $L_{TR2}$, respectively. Then, we have:
\begin{equation}\label{rand}
L_{TR1}+L_{TR2}=2\cdot L_{\delta B} \leq 2\cdot L(T^*)
\end{equation}
(The length of the three tours differs at the vertices: drawing a
line perpendicular there from $TR2$ to $TR1$ the Intercept Theorem
shows that the distance to the diagonal through the vertices of all
tours on $TR1$ is twice as much as on the boundary tour with
distance $r$ to the boundary.)

%
The two ``boundary'' tours allow us to cover a corridor of width
$2r$ with a bounded number of scans, while (\ref{rand}) enables us
to bound the tour length in terms of the optimal length.

\item[(2)] A ``strip'' part:
For the interior we use strips again: $P_{int} := P \backslash
P_{\delta B}$---if nonempty---can be covered by a set of $k_1$
horizontal strips $\Sigma_i^1$. The $y$-coordinates of two strips
differ by multiples of $2r$. We can consider another set of strips,
$\Sigma_i^2$, shifted by $r$. Then, let $L_{str}^j =
\sum_{i=1}^{k_j} L_{\Sigma_{i}^j}$. Similar to the argument for
$L_\infty$, we have  $L_{str}^1 + L_{str}^2 \leq 2\cdot L(T^*)$.
\item[(3)] A ``matching'' part: In order to combine
the two ``boundary parts'' and the two sets of strips for a tour we
add two more set of sections.
\begin{itemize}
\item The center lines of the strips have a distance of $r$ to the boundary,
thus they do not yet touch $TR1$. Consequently, we add $1/2 r$ to
each center line (on each end). For that purpose, we consider the
matchings as defined above. (Consider the endpoints of strips on
$\delta B_i$: every $\delta B_i$ contains an even number of such
endpoints. Hence, every $\delta B_i$ is partitioned into two
disjoint portions, $M_1(\delta B_i)$ and $M_2(\delta B_i)$. Using
the shorter of these two ($M_*(\delta B_i)$) for every $\delta B_i$
we obtain for the combined length, $L_M$: $L_M \leq L_{str}/2 \leq
L(T^*)/2$.) Because two strips are at least a distance of $r$ apart,
the connection to $TR1$ costs less than $1/2\cdot L_M \leq 1/2\cdot
L_{str}/2 \leq L(T^*)/4$.
\item Moreover, we consider the above matchings defined on $TR1$ and
insert the shorter sections of the disjoint parts, ($M^1_*(\delta
B_i)$), for every $\delta B_i$. The Intercept Theorem in combination
with the analogously defined sections on $TR2$ enables us to give an
upper bound of $L_{M^1} \leq L_{str} \leq L(T^*)$.
\end{itemize}
Starting on some point on $TR1$, tracing the strips, and the inner
``boundary'' $TR2$ at once when passing it yields a tour; the above
inequalities show that $L(T) \leq 21/4 \cdot L(T^*)$.
\end{itemize}


Now we only have to take care of (I), i.e., construct an appropriate
set of scan points. For the ``boundary'' part we place scans with
the center points located on $TR1$ and $TR2$ in distance
$\sqrt{3}\cdot r$ (see~Figure~\ref{distance}) if possible, but at
corners we need to place scans, so the minimum width we are able to
cover with the two scans (on both tours) is $a$. For the ``strip''
part the distance of scans is also $\sqrt{3}\cdot r$ on both strip
sets, exactly the distance enabling us to cover a width of $r$,
see~Figure~\ref{distance}.

\begin{figure}[h]
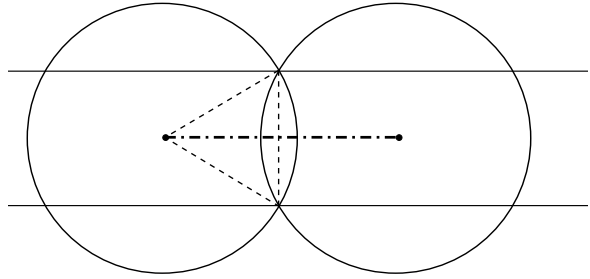

\centering
    \comicII{.5\textwidth}{kreise}
 \caption{\label{distance}\small The dashed lines indicate a length of $r$,
 so the dash-dotted line has a length of $\sqrt{3} \cdot r$, the distance of two scans.}
\end{figure}

It remains to consider the costs for the scans. We start with the
inner part. Taking scans within a distance of $\sqrt{3}\cdot r$, we
may need the length divided by this value, plus one scan. We only
charge the first part to the strips, the (possible) additional scans
are charged to the ``boundary'' part, as we have no minimum length
of the strips. The optimum cannot cover more than $\pi r^2$ with one
scan. Let $L_{str}=max(L_{str}^1, L_{str}^2)$:
\begin{eqnarray}
|{\cal S}(T^*)| \geq \frac{L_{str}}{\pi r/2},\;\;\; |{\cal S}(T)|
\leq \frac{2L_{str}}{\sqrt{3} \cdot r}\; \Rightarrow \; \frac{|{\cal
S}(T)|}{|{\cal S}(T^*)|} \leq \frac{2L_{str}}{\sqrt{3} \cdot r}
\cdot \frac{\pi r/2}{L_{str}} = \frac{\pi}{\sqrt{3}}
\end{eqnarray}
Finally, we consider the ``boundary''. We assume $L_{\delta B} \geq
1$. So $|{\cal S}(T^*)| \geq \frac{L_{\delta B}}{\pi r/2}$. We may
need to scan within a distance of $a$---on two strips---, need
additional scans and have to charge the scans from the ``strip''
part, hence, this yields: $|{\cal S}(T)| \leq \frac{L_{\delta
B}}{a/2}+1+\frac{L_{\delta B}}{r}$. Consequently, for $r \geq a$:
\begin{eqnarray} \frac{|{\cal S}(T)|}{|{\cal S}(T^*)|} \leq \frac{\pi r}{a} +
\frac{\pi r}{2} + \frac{\pi}{2}\end{eqnarray} 

\begin{theorem}\label{app_l2}
A polygon $P$ allows a MWPDV solution that contains at most a cost
of $max(\frac{21}{4}, \frac{\pi r}{a} + \frac{\pi r}{2} +
\frac{\pi}{2})$ times the cost of an optimum MWPDV solution (for $r
\geq a$).
\end{theorem}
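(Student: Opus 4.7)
The plan is to observe that the theorem is essentially a bookkeeping consequence of the two bounds assembled during the preceding construction, combined through the linearity of the objective function $t(T) = c\,|{\cal S}(T)| + L(T)$. I would first organize the tour-length bound. The constructed tour $T$ is the concatenation of the two ``boundary'' tours $TR1, TR2$, the two shifted strip families, and the matching sections that glue strips to $TR1$ and $TR1$ to $TR2$. Summing the length contributions $L_{TR1} + L_{TR2} \leq 2 L(T^*)$ from inequality (\ref{rand}), $L_{str}^1 + L_{str}^2 \leq 2 L(T^*)$ from the strip lower bound, and the two matching contributions $\leq L(T^*)/4$ and $\leq L(T^*)$ from the Intercept Theorem arguments already given, yields the uniform bound $L(T) \leq \frac{21}{4}\, L(T^*)$.

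Next, I would collect the scan-count bound. The scans placed on the two strip families are spaced at $\sqrt{3}\,r$; using the observation that no optimal scan covers more than area $\pi r^2$ while the strip placement covers width $r$, the division gives $|{\cal S}_{\text{strip}}(T)|/|{\cal S}(T^*)| \leq \pi/\sqrt{3}$, which is absorbed by the boundary bound. For the scans on $TR1, TR2$ the feature-size hypothesis $r \geq a$ forces at most $\sqrt{3}\,r$-spacing in the interior of edges but possibly spacing $a$ at reflex corners and the ``doubled'' placement on both boundary tours; together with $|{\cal S}(T^*)| \geq L_{\delta B}/(\pi r/2)$ and the estimate $|{\cal S}(T)| \leq L_{\delta B}/(a/2) + 1 + L_{\delta B}/r$ (the $+1$ absorbed by the assumption $L_{\delta B} \geq 1$, and the final term charging the extra strip endpoint scans here), this gives $|{\cal S}(T)|/|{\cal S}(T^*)| \leq \frac{\pi r}{a} + \frac{\pi r}{2} + \frac{\pi}{2}$.

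With $\alpha_L := 21/4$ and $\alpha_S := \pi r/a + \pi r/2 + \pi/2$ in hand, I would finish with the one-line combination
\begin{equation*}
t(T) = c\,|{\cal S}(T)| + L(T) \leq \alpha_S\cdot c\,|{\cal S}(T^*)| + \alpha_L \cdot L(T^*) \leq \max(\alpha_S,\alpha_L)\,\bigl(c\,|{\cal S}(T^*)| + L(T^*)\bigr) = \max(\alpha_S,\alpha_L)\,t(T^*),
\end{equation*}
which is exactly the claimed $\max(21/4,\ \pi r/a + \pi r/2 + \pi/2)$ factor. Note that because the same tour $T$ and the same scan set ${\cal S}(T)$ realize both bounds simultaneously, the result is automatically a bicriteria approximation — this is the reason one can splice the two ratios together through $\max(\cdot,\cdot)$ rather than paying a sum.

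The main obstacle is not algebraic but conceptual: the optimal MWPDV tour $T^*$ that appears in the two bounds is not literally the same witness in both places (its tour length and its scan count are each lower bounds through distinct geometric arguments — perimeter/strip length on the one hand, area/disk-packing on the other). I would need to be careful that both the length lower bound $L(T^*) \geq \max(L_{\delta B}, L_{str}^1+L_{str}^2)/2$ and the scan lower bound $|{\cal S}(T^*)|\geq \max(L_{str}/(\pi r/2), L_{\delta B}/(\pi r/2))$ are valid for every feasible MWPDV tour, so that the two component inequalities can be applied to the common $T^*$ in the combination step above. Once that common reference is fixed, the theorem follows immediately.
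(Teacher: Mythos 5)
Your proposal is correct and follows essentially the same route as the paper, which proves Theorem~\ref{app_l2} implicitly by the derivation preceding it: summing the boundary, strip, and matching length bounds to get $L(T)\leq \frac{21}{4}L(T^*)$, bounding the scan ratio via the area/packing and spacing arguments to get $\frac{\pi r}{a}+\frac{\pi r}{2}+\frac{\pi}{2}$, and combining through the linearity of $t(T)$. Your closing remark that both lower bounds must hold for the one common optimal witness $T^*$ is a worthwhile clarification, but it does not change the argument.
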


Note that Theorem \ref{app_l2} covers the case from Section
\ref{AppxCircular}; however, instead of the custom-built factor of
$4$ it yields a factor of $2\cdot \pi$.

\section{A PTAS for MWPDV Lawn Mowing}\label{ptas}

We describe in detail here the following special case, and then
discuss how the method generalizes.  Consider a polyomino $P$ (the
``grass'') that is to be ``mowed'' by a $k\times k$ square, $M$.  At
certain discrete set ${\cal S}(T)$ of positions of $M$ along a tour
$T$, the mower is activated (a ``scan'' is taken), causing all of
the grass of $P$ that lies below $M$ at such a position to be mowed.
For complete coverage, we require that $P$ be contained in the union
of $k\times k$ squares centered at points ${\cal S}(T)$.  Between
scan positions, the mower moves along the tour $T$. 

In this ``lawn mower'' variant of the problem, the mower is not
required to be fully inside $P$; the mower may extend outside $P$
and move through the exterior of $P$, e.g., in order to reach
different connected components of $P$. Since $P$ may consist of
singleton pixels, substantially separated, the problem is NP-hard
even for $k=1$, from TSP.

Here we describe a PTAS for the problem.  We apply the
$m$-guillotine method, with special care to handle the fact that we
must have full coverage of $P$.  Since the problem is closely
related to the TSPN~\cite{dm-aatnp-03,m-ptast-07}, we must address
some of the similar difficulties in applying PTAS methods for the
TSP: in particular, a mower centered on one side of a cut may be
responsible to cover portions of $P$ on the opposite side of the
cut.

At the core of the method is a structure theorem, which shows that
we can transform an arbitrary tour $T$, together with a set ${\cal
S}(T)$ of scan points, into a tour and scan-point set, $(T_G,{\cal
S}(T_G))$, that are $m$-guillotine in the following sense: the
bounding box of the set of $k\times k$ squares centered at ${\cal
S}(T)$ can be recursively partitioned into a rectangular subdivision
by ``$m$-perfect cuts''. An axis-parallel cut line $\ell$ is
$m$-perfect if its intersection with the tour has at most $m$
connected components and its intersection with the union of $k\times
k$ disks centered at scan points consists of $m$ disks or ``chains
of disks'' (meaning a set of disks whose centers lie equally spaced,
at distance $k$, along a vertical/horizontal line); see
Figure~\ref{mperfect}.  (The definition of $m$-perfect in
\cite{m-gsaps-99} has a slightly different specification, in terms
of the number of endpoints of the connected components of $T\cap
\ell$, but it is within a constant factor equivalent to what we
define here.)

\begin{figure}[h]
\centering
    \comicII{.7\textwidth}{m_perfect}
 \caption{\label{mperfect}\small The green axis-parallel cut line $\ell$
intersects the tour $T$ (shown in red) in 4 connected components and
intersects the scan ranges (shown in blue), centered at scan points,
in a single chain of $k \times k$ disks.  (Here, $k=2$.)}
\end{figure}

The structure theorem is proved by showing the following lemma:

\begin{lemma}
For any fixed $m=\lceil 1/\epsilon \rceil$ and any choice of
$(T,{\cal S}(T))$, one can add a set of {\em doubled bridge
segments}, of total length $O(|T|/m)$, to $T$ and a set of $O(|{\cal
S}(T)|/m)$ {\em bridging scans} to ${\cal S}(T)$ such that the
resulting set, $(T_G,{\cal S}(T_G))$, is $m$-guillotine, with points
${\cal S}(T_G)$ on tour $T_G$ and with $T_G$ containing an Eulerian
tour of ${\cal S}(T_G)$.
\end{lemma}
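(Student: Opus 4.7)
The plan is to adapt the $m$-guillotine recursion of Mitchell to carry both the tour $T$ and the collection of $k\times k$ scan squares centered at $\mathcal{S}(T)$ through the recursion simultaneously. Consider the bounding box $R_0$ of the union of scan squares and recursively cut $R_0$ with axis-parallel lines; at each cut we augment $T$ with doubled bridge segments and augment $\mathcal{S}(T)$ with bridging scans so that the cut becomes $m$-perfect. The tour-bridging is the standard Mitchell construction; the new ingredient is the bridging of scan disks so that they partition into at most $m$ chains across every cut.

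For each axis-parallel line $\ell$ define $t(\ell)$ to be the number of transversal $T$-crossings of $\ell$ and $s(\ell)$ to be the number of scan squares crossed by $\ell$. The standard integral identities give, for any horizontal strip $H$,
\begin{equation}
\int_H t(\ell)\,d\ell \le |T\cap H|, \qquad \int_H s(\ell)\,d\ell = k\cdot|\mathcal{S}_H|,
\end{equation}
where $|T\cap H|$ is the tour length inside $H$ and $\mathcal{S}_H$ is the set of scans whose squares meet $H$; the analogous identities hold for vertical strips. Define the combined cost of a candidate cut by $c(\ell)=L_{\mathrm{br}}(\ell)+k\cdot N_{\mathrm{br}}(\ell)$, where $L_{\mathrm{br}}(\ell)$ is the length of doubled tour bridges needed to drop $t(\ell)$ to $m$ and $N_{\mathrm{br}}(\ell)$ is the number of bridging scans needed along $\ell$ to merge the disks crossing $\ell$ into at most $m$ chains (obtained by placing new scan centers directly on $\ell$ at the $k$-spacing required to close the gaps between consecutive original crossings).

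I then apply Mitchell's favorable-cut lemma to the combined potential $\Phi(R)=|T\cap R|+k^2|\mathcal{S}_R|$: by averaging the two integral identities over either coordinate direction, and ordering the level-sets of $t(\ell)+k\cdot s(\ell)$ by $m$-th order statistics, some axis-parallel cut of $R$ achieves $c(\ell)=O(\Phi(R)/m)$. Recursing on the two subrectangles, the standard amortized charging telescopes the bridging costs to $O(|T|/m)$ total added tour length and $O(|\mathcal{S}(T)|/m)$ total added scans. The Eulerian property of $T_G$ follows because every bridge segment is doubled (contributing even degree at its endpoints) and every new scan lies on such a doubled bridge, so the multigraph formed by $T$ together with the doubled bridges and the short detours to the added scans has all vertices of even degree and contains an Eulerian tour through $\mathcal{S}(T_G)$.

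The main obstacle is coordinating the two structures at a single cut: a cut that is favorable for the tour need not be favorable for the scans. This is resolved by running the averaging on the linear combination $t(\ell)+k\cdot s(\ell)$ rather than on either term alone, exploiting that the two integral identities scale with $|T\cap H|$ and $k|\mathcal{S}_H|$ respectively and thus admit the same pigeonhole treatment. A secondary technical point is that the bridging scans must actually form chains in the sense defined in the excerpt, namely disks with centers equally spaced at distance $k$ along a line; one handles this by ignoring the off-line positions of the existing centers and instead placing a fresh chain of scans on $\ell$ whose disks cover the $\ell$-span of each original crossing. The number of inserted scans is at most $s(\ell)-m$, which is precisely what the recursion accounting requires to close.
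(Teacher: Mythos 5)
Your overall plan coincides with the paper's: run Mitchell's $m$-guillotine recursion on the tour and the scan set simultaneously, select each cut by averaging a combined cost over the two coordinate directions, pay for tour bridges and for chains of bridging scans at each cut, and recover the Eulerian property from the doubling of bridges. The difficulties you single out (coordinating the two structures at a single cut, and forcing the inserted scans to form genuine $k$-spaced chains) are the right ones. However, two steps of your accounting do not hold as written.

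First, your favorable-cut criterion is too weak to close the recursion. You assert that some cut of $R$ has cost $c(\ell)=O(\Phi(R)/m)$ and that ``standard amortized charging telescopes,'' but a per-rectangle bound of this form, summed over all nodes of the recursion tree, gives $O(d\cdot\Phi(R_0)/m)$ with $d$ the recursion depth, which is not $O((|T|+k|{\cal S}(T)|)/m)$. The statement that actually telescopes --- and the one the paper proves --- is that some cut has cost at most its \emph{chargeable} ($m$-dark) length, where a point of the cut is $m$-dark if the two rays perpendicular to the cut from that point each meet at least $m$ tour segments (resp.\ scan disks) before leaving $R$; that length is charged $1/m$ to each of the first $m$ elements hit on either side, and each side of each tour edge or disk boundary is charged at most once over the entire recursion. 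Note also that your integral identities control crossing \emph{counts}, whereas the quantity to be paid for is the $m$-span, a \emph{length} along the cut; the identity needed equates the area of the $m$-span region for one cut direction with the integral of the $m$-dark length over the perpendicular direction. Second, your count of bridging scans is incorrect: to merge the disks crossing $\ell$ into at most $m$ chains one must fill the $m$-scan-span with disks spaced $k$ apart, which costs (span length)$/k$ scans, not $s(\ell)-m$; already $2m+1$ crossed disks spread across the full width of $R$ force $\Theta(\mathrm{width}(R)/k)$ insertions. This is precisely why the paper charges the $m$-scan-span length to the perimeters of the nearby disks (each of perimeter $\Theta(k)$, hence proportional to their cardinality) rather than to a crossing count. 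With the chargeable-length formulation both totals come out to $O(|T|/m)$ added length and $O(|{\cal S}(T)|/m)$ added scans as claimed; without it, your bounds are unsupported.
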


\begin{proof}
For a vertical line, $\ell_x$, through coordinate $x$, let $f_1(x)$
denote the length of the {\em $m$-span} of $\ell_x$ with respect to
$R$ and $T$: $f_1(x)=0$ if, within $R$, $\ell_x$ intersects $T$ in
at most $2m$ connected components; otherwise, if $\ell_x$ intersects
$T\cap R$ in components $c_1,c_2,\ldots,c_K$, $K>2m$, then $f_1(x)$
is the distance (along $\ell_x$) from component $c_m$ to component
$c_{K-m+1}$.  Similarly, we define the length, $f_2(x)$, of the {\em
$m$-scan-span} of $\ell_x$ with respect to $R$ and ${\cal S}(T)$:
$f_2(x)=0$ if, within $R$, $\ell_x$ intersects at most $2m$ of the
scan disks ($k\times k$ squares) centered at points ${\cal S}(T)$;
otherwise, if $\ell_x$ intersects scan-disks $D_1,D_2,\ldots,D_K$,
$K>2m$, then $f_2(x)$ is the distance (along $\ell_x$) from disk
$D_m$ to disk $D_{K-m+1}$.  We think of $f_1(x)$ as the ``cost'' to
construct a vertical bridge for $T$ at position $x$, and $f_2(x)$ as
the ``cost'' to add a sequence (chain) of scan disks, centered along
$\ell_x$, and a detour of the tour (of length $O(k)$ per disk) that
visits their centers.  (We similarly define costs $g_1(y)$ and
$g_2(y)$ for constructing bridges along a horizontal cut $\ell_y$
through coordinate $y$.)  The total cost associated with a vertical
cut at $\ell_x$ is proportional, then, to $f_1(x)+f_2(x)$, and the
cost of a horizontal cut at $\ell_y$ is proportional to
$g_1(y)+g_2(y)$.  (Here is where we are using the fact that the
total cost is a linear combination of tour length and number of
scans, with a fixed bound, $c$, on the relative cost ratio of length
versus number of scans and that the scan disk has constant
size~$k$.)

In order to charge off the cost of constructing bridges along the
$m$-span and adding a chain of scan disks (and a subtour linking
them) along the $m$-scan-span, we use the notion of ``chargeable
length'' based on the ``$m$-dark'' length of $\ell_x\cap R$, with a
notion of ``$m$-darkness'' determined not just from the tour $T$,
but also from the set of scans ${\cal S}(T)$.  Specifically, a
subset $ab$ of $\ell_x\cap R$ is said to be $m$-dark with respect to
$T$ if for any $p\in ab$, the rightwards and leftwards rays from $p$
each cross at least $m$ (vertical) segments of $T$ before exiting
$R$. Similarly, $ab$ is said to be $m$-dark with respect to the scan
disks ${\cal S}(T)$ if for any $p\in ab$, the rightwards and
leftwards rays from $p$ each intersect at least $m$ scan disks
centered on ${\cal S}(T)$ before exiting $R$.  If a cut is made
along $\ell_x$, then the $m$-dark with respect to $T$ portion of the
cut can be charged off to the left/right sides of segments of $T$
lying to the right/left of $\ell_x$, distributing the charge to be
($1/m$)th to each of the $m$ segments first hit.  Similarly, the
portion of the cut that is $m$-dark with respect to ${\cal S}(T)$
can be charged off to the scan-disks (or, more precisely, to their
total perimeter, which is proportional (via constant $k$) to their
cardinality).

The key observation, then, is that there must exist a ``favorable''
vertical cut $\ell_x$ or horizontal cut $\ell_y$ for $R$ such that
the chargeable length of the cut is at least as long as the cost of
the cut.  This follows from the usual argument (\cite{m-gsaps-99}),
using the fact that $\int_{x\in R} (f_1(x)+f_2(x)) dx = \int_{y\in
R} h(y) dy$, where $h(y)$ is the chargeable length associated with
the horizontal cutt $\ell_y$, and assuming, without loss of
generality, that $\int_{x\in R} (f_1(x)+f_2(x)) dx \geq \int_{y\in
R} (g_1(y)+g_2(y)) dy$: There must exist a value $y^*$ where
$g_1(y^*)+g_2(y^*)\leq h(y^*)$, which then defines a favorable cut
$\ell_{y^*}$ for which the cost of constructing the $m$-span bridge
and the $m$-scan-span sequence of scan disks is chargeable to
lengths of $T$ and disks of ${\cal S}(T)$ in such a way that no
length of disk gets charged more than an amount proportional to
($1/m$)th of its length/count.

Once a favorable cut is found with respect to one rectangle $R$, the
cut partitions the problem into two subrectangles, and the argument
is recursively applied to each. The end result is an $m$-guillotine
subdivision of the original bounding box of $T$. \qed
\end{proof}

\begin{figure}[h]
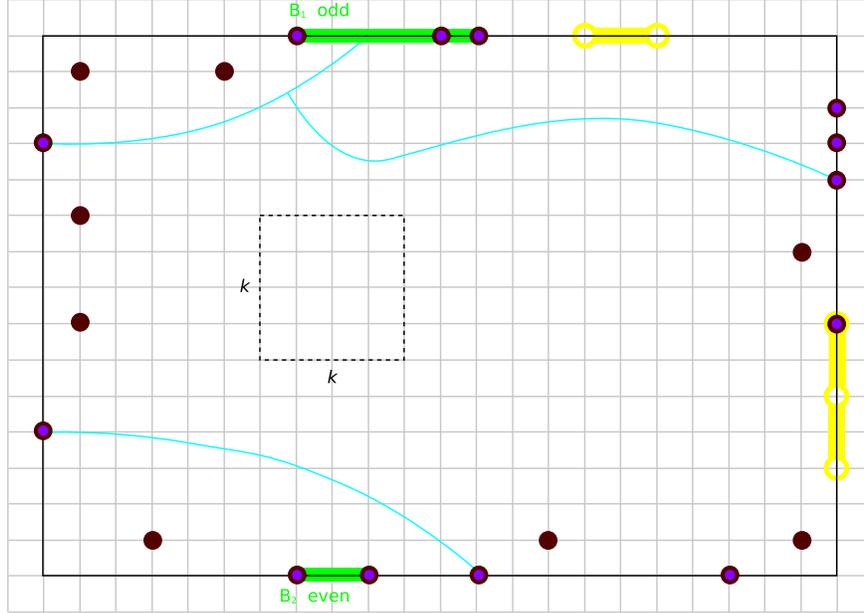

\centering
    \comicII{.7\textwidth}{boundary_information}
 \caption{\label{boundinfo}\small An example for the boundary information within a rectangle R.
 The ``portals'' are depicted in violet, the bridges in green, the
 disk bridges in yellow, the scan positions such that a $k\times k$
square centered at each position intersects the corresponding side
of $R$ in brown and the connection pattern in turquoise.
 (The $k \times k$ scan range is shown by dotted lines.)}
\end{figure}

The algorithm is based on dynamic programming to compute an optimal
$m$-guillotine network.  A subproblem is specified by a rectangle,
$R$, with integer coordinates.  The subproblem includes
specification of {\em boundary information}, for each of the four
sides of $R$; see Figure~\ref{boundinfo}.  The boundary information
includes: (i) $O(m)$ integral points (``portals'') where the tour is
to cross/touch the boundary, (ii) at most one (doubled) bridge and
one disk-bridge (chain) per side of $R$, with each bridge having a
parity (even or odd) specifying the parity of the number of
connections to the bridge from within $R$, (iii) $O(m)$ scan
positions (from ${\cal S}(T)$) such that a $k\times k$ square
centered at each position intersects the corresponding side of $R$,
(iv) a connection pattern, specifying which subsets of the
portals/bridges are required to be connected within $R$.  There are
a polynomial number of subproblems.  For a given subproblem, the
dynamic program optimizes over all (polynomial number of) possible
cuts (horizontal or vertical), and choices of bridge, disk-bridge,
parity assignments, and compatible connection patterns for each side
of the cut.  The result is an optimal $m$-guillotine network, with
doubled bridges (so that it contains an Eulerian subgraph spanning
the nodes), and a scan-point set, ${\cal S}$, visited by the
network, such that the union of $k\times k$ squares centered at the
points ${\cal S}$ covers the input polygon $P$.  Since we know, from
the structure theorem, that an optimal tour $T$, together with a set
${\cal S}(T)$ of covering scan points can be converted into a tour
and scan-point set, $(T_G,{\cal S}(T_G))$, that are $m$-guillotine,
and we have computed an optimal such structure, we know that the
tour we extract from our computed network approximates optimal.  We
summarize:

\begin{theorem}\label{PTAS}
There is a PTAS for MWPDV lawn mowing of a (not necessarily
connected) set of pixels by a $k\times k$ square.
\end{theorem}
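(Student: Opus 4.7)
The plan is to combine the structure theorem (proved in the preceding lemma) with the dynamic program sketched above. First I would apply the lemma to an optimal solution $(T^*,{\cal S}(T^*))$ with $m = \lceil 1/\epsilon \rceil$: this yields an $m$-guillotine pair $(T_G,{\cal S}(T_G))$ obtained by inserting doubled bridges of total length $O(L(T^*)/m)$ and $O(|{\cal S}(T^*)|/m)$ bridging scans. Because the objective is a linear combination $c \cdot |{\cal S}| + L$ with fixed $c$ and each scan square has constant diameter $O(k)$, these insertions increase the total cost by a factor of at most $1 + O(1/m) = 1 + O(\epsilon)$, and ${\cal S}(T_G)$ still covers $P$ since we only added scans.

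Next I would argue that the dynamic program over subproblems returns the cheapest $m$-guillotine network whose vertex set contains an Eulerian tour and whose scan set has $k \times k$ squares covering $P$. A subproblem is a rectangle $R$ with integer corners decorated with boundary information: $O(m)$ portals, at most one bridge and one disk-bridge per side (each with a parity), $O(m)$ scan positions whose scan squares straddle the boundary of $R$, and a compatible connection pattern. Since $m$ is constant and all relevant coordinates are polynomially bounded, there are only polynomially many subproblems; for each, the recurrence optimizes over $m$-perfect horizontal and vertical cuts and over compatible splits of the boundary information, giving total running time $n^{O(m)} = n^{O(1/\epsilon)}$.

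The main obstacle I expect is the coverage constraint across cuts: a pixel inside a subrectangle $R$ may be covered by a $k \times k$ scan centered in its complement. I would handle this exactly via the boundary information -- the $O(m)$ externally-centered scan squares that cross any side of $R$ are encoded there, and the DP checks that every pixel of $P \cap R$ is covered by the union of interior scans and these boundary scans. This localizes the coverage test to each subproblem and makes it factor properly across the recurrence. A minor secondary issue is that the scan-induced cost contributes to the $f_2,g_2$ terms of the chargeable-length argument, which is already folded into the favorable-cut averaging in the lemma's proof.

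Putting the pieces together, the DP output $(T_{DP},{\cal S}_{DP})$ has cost at most $t(T_G) \le (1 + O(\epsilon)) \, t(T^*)$; extracting an Eulerian tour from the doubled $m$-guillotine network and visiting the scans along the way produces a feasible MWPDV solution of the same cost. Rescaling $\epsilon$ by the hidden constant gives a $(1+\epsilon)$-approximation in time $n^{O(1/\epsilon)}$, which is a PTAS. The hardest step -- the existence of a favorable cut whose chargeable length dominates the combined bridge-plus-disk-bridge cost -- has already been carried out in the preceding lemma; modulo that, the argument is a direct bicriteria adaptation of Mitchell's $m$-guillotine framework.
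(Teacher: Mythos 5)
Your proposal is correct and follows essentially the same route as the paper: apply the structure theorem to an optimal pair $(T^*,{\cal S}(T^*))$ to get an $m$-guillotine solution of cost $(1+O(1/m))$ times optimal, then run a dynamic program over integer-coordinate rectangles whose boundary information (portals, doubled bridges and disk-bridges with parities, the $O(m)$ boundary-straddling scan positions, and a connection pattern) localizes both connectivity and the cross-cut coverage constraint. Your handling of pixels covered by scans centered outside the subrectangle via item (iii) of the boundary information is exactly the paper's mechanism, so no further comparison is needed.
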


\paragraph{The Milling Variant.}
Our method applies also to the ``milling'' variant of the MWPDV, in
which the scans all must stay within the region $P$, provided that
$P$ is {\em simple} (no holes), and all of $P$ is reachable by a
scanner ($k\times k$ square).  We describe the changes that are
required. Subproblems are defined, as before, by axis-aligned
rectangles $R$. The difficulty now is that the restriction of $R$ to
$P$ means that there may be many ($\Omega(n)$) vertical/horizontal
chords of $P$ along one side of $R$. We can ignore the boundary of
$P$ and construct an $m$-bridge (which we can ``afford'' to
construct and charge off, by the same arguments as above) for $T$,
but only the portions of such a bridge that lie inside $P$ (and form
chords of $P$) are traversable by our watchman (since the other
portions are outside $P$). For each such chord, the subproblem must
``know'' if the chord is crossed by some edge of the tour, so that
connections made inside $R$ to a chord are not just made to a
``dangling'' component. We cannot afford to specify one bit per
chord, as this would be $2^{\Omega(n)}$ information. However, in the
case of a simple polygon $P$, no extra information must be specified
for the subproblem -- a chord is crossed by $T$ if and only if the
mower (scan) fits entirely inside the simple subpolygon on each side
of the chord; if the subpolygon outside of $R$, on the other side of
a chord, does not contain a $k\times k$ square, then we know that
the entire subpolygon is covered using scanned centered within $R$.
Exploiting this fact, the dynamic programming algorithm of our PTAS
is readily modified to the MWPDV milling problem within a simple
rectilinear polygon.

\begin{theorem}\label{PTAS-milling}
There is a PTAS for MWPDV milling of a simple rectilinear polygon by
a $k\times k$ square.
\end{theorem}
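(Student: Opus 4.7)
The plan is to adapt the PTAS for lawn mowing (Theorem~\ref{PTAS}), keeping the overall architecture intact: an $m$-guillotine structure theorem transforming $(T,{\cal S}(T))$ into $(T_G,{\cal S}(T_G))$ at a $(1+O(1/m))$ multiplicative cost overhead, followed by a dynamic program that computes an optimal $m$-guillotine solution subject to the additional "stay inside $P$" constraint. The structure theorem transfers essentially unchanged: the integrals $\int f_1, \int f_2$ bounding bridging and disk-chain insertion costs are computed exactly as before, because confining scans and tour to $P$ only further restricts where bridges may be placed, but never decreases the chargeable $m$-dark length available. When a favorable cut $\ell_{y^*}$ is chosen, I would install bridges and disk-chains only along the chords of $P\cap\ell_{y^*}$ that are populated by components of $T$ or disks of ${\cal S}(T)$; the empty chords are left alone.

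The main obstacle, and the reason the theorem is restricted to simple polygons, is that along a side of a subproblem rectangle $R$, the boundary $\partial P$ may create $\Omega(n)$ separate chords of $P\cap R$. A naive dynamic program would have to specify, per chord, whether the tour crosses that chord, giving $2^{\Omega(n)}$ boundary states and destroying the polynomial bound. The key observation I would exploit, which is special to simple $P$, is that the answer to "does $T$ cross this chord?" is completely determined by a local geometric test on $P$ alone: the tour must cross a chord if and only if the subpolygon of $P$ lying on the opposite side of the chord (outside $R$) admits some placement of the $k\times k$ scanner. If it does, coverage of that subpolygon forces the tour to enter it; if it does not, the whole subpolygon can be covered by scans centered on the $R$-side of the chord, and no crossing is required. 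Since this test depends on $P$ only, it need not be encoded in the subproblem state.

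With this observation in hand, the dynamic program proceeds almost identically to the lawn mowing case. A subproblem is specified by a rectangle $R$ with integer coordinates plus boundary information on each side (as in Figure~\ref{boundinfo}): $O(m)$ portals, at most one doubled bridge and one disk-bridge per side with parity bits, $O(m)$ straddling scan positions, and a compatible connection pattern; the modification is only that portals, bridges, and disk-bridges must lie in the relevant $P\cap\ell$, and that scans are constrained to have their full $k\times k$ square inside $P$. For each subproblem the algorithm enumerates the polynomially many cuts and boundary specifications on both sides and combines recursively computed values. Correctness of coverage follows from the chord-crossing lemma above, which guarantees that the network extracted from the optimal subproblem values indeed carries an Eulerian tour whose scan-disks cover all of $P$; near-optimality follows from the structure theorem applied to a true optimum. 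The result is a $(1+O(1/m))$-approximation running in time polynomial in $n$ for any fixed $m=\lceil 1/\epsilon\rceil$, which is the desired PTAS.
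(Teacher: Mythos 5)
Your proposal matches the paper's own argument essentially verbatim: the same $m$-guillotine structure theorem carried over, the same identification of the $\Omega(n)$-chords-per-side obstacle, and the same key observation that for a simple polygon the chord-crossing question is resolved by testing whether a $k\times k$ square fits in the subpolygon beyond the chord, so no per-chord state is needed in the dynamic program. No substantive differences to report.
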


\section{Conclusion}\label{conclusion}

A number of open problems remain. Is it possible to remove the
dependence on the ratio $(r/a)$ of the approximation factor in our
algorithm for general MWPDV milling? This would require a
breakthrough for approximating minimum guard cover; a first step may
be to achieve an approximation factor that depends on $\log (r/a)$
instead of $(r/a)$.

For combined cost, we gave a PTAS for a lawn mowing variant, based
on guillotine subdivisions. The PTAS extends to the milling case for
simple rectilinear polygons.  We expect that the PTAS extends to
other cases too (circular scan disks, Euclidean tour lengths), but
the generalization to arbitrary domains with (many) holes seems
particularly challenging. Our method makes use of a fixed ratio
between scan cost and travel cost; as discussed in Figure~1, there
is no PTAS for the bicriteria version.

\subsection*{Acknowledgements}

We thank Justin Iwerks for several suggestions that improved the
presentation.


\small
\bibliographystyle{abbrv}
\bibliography{lit}

\end{document}